\documentclass[sn-mathphys,Numbered]{sn-jnl}
\usepackage{graphicx}
\usepackage{multirow}
\usepackage{amsmath,amssymb,amsfonts}
\usepackage{amsthm}
\usepackage{mathrsfs}
\usepackage[title]{appendix}
\usepackage{xcolor}
\usepackage{textcomp}
\usepackage{manyfoot}
\usepackage{booktabs}
\usepackage{algorithm}
\usepackage{algorithmicx}
\usepackage{algpseudocode}
\usepackage{listings}
\theoremstyle{thmstyleone}
\newtheorem{theorem}{Theorem}
\newtheorem{proposition}[theorem]{Proposition}
\theoremstyle{thmstyletwo}
\newtheorem{example}{Example}
\newtheorem{remark}{Remark}

\theoremstyle{thmstylethree}
\newtheorem{definition}{Definition}

\newtheorem{lemma}[theorem]{Lemma}
\raggedbottom

\begin{document}

\title[Two RSA-based Cryptosystems ]{Two RSA-based Cryptosystems }

\author*[1]{\fnm{A.} \sur{Telveenus}}\email{t.fernandezantony@kingston.ac.uk}

\affil*[1]{\orgdiv{International Study Centre}, \orgname{Kingston University}, \orgaddress{\street{Kingston Hill Campus}, \city{London}, \postcode{KT2 7LB},  \country{UK}}}

\abstract{The cryptosystem RSA is a very popular cryptosystem in the study of Cryptography. In this article, we explore how the idea of a primitive $m^{th}$ root of unity in a ring can be integrated into the Discrete Fourier Transform, leading to the development of new cryptosystems known as RSA-DFT and RSA-HGR.}

\keywords{Primitive $m^{th}$ roots of unity,  halidon rings, halidon group rings, RSA cryptosystem,  Discrete Fourier Transform.}

\pacs[MSC Classification]{16S34, 20C05, 11T71}

\maketitle

\section{Introduction}
The development of asymmetric cryptography, also known as public-key cryptography, is the greatest and perhaps the only true revolution in the history of cryptography.A significant portion of the theory behind public-key cryptosystems relies on number theory \cite{ws}. RSA is a widely used public-key cryptosystem for secure data transmission. It was first described publicly in 1977 by Ron Rivest, Adi Shamir, and Leonard Adleman. An equivalent system had been secretly developed in 1973 by Clifford Cocks at the British signals intelligence agency, GCHQ. \\
In RSA, the encryption key is public, while the decryption key is private. To create a public key, a user chooses two large prime numbers and an auxiliary value, and then publishes this information. The prime numbers are kept secret. Anyone can use the public key to encrypt messages, but only someone who knows the private key can decrypt them.
The security of RSA relies on the factoring problem, which is the practical difficulty of factoring the product of two large prime numbers. Breaking RSA encryption is known as the RSA problem. RSA is a relatively slow algorithm, so it is not commonly used to directly encrypt user data. Instead, it is used to transmit shared keys for symmetric-key cryptography. The author establishes a connection between RSA and Halidon rings using Discrete Fourier Transform and Halidon group rings in this article.

In 1940, the famous celebrated mathematician Graham Higman published a theorem \cite{gh},\cite{gk} in group algebra which is valid only for a field or an integral domain with some specific conditions. In 1999, the author noticed that this theorem can be extended to a rich class of rings called halidon rings\cite{at}.  \\

A primitive $m^{th}$ root of unity in a ring with unit element is completely different from that of in a field, because of the presence of nonzero zero divisors.  So we need a separate definition for a primitive $m^{th}$ root of unity. An element $\omega $ in a ring $R$ is called a  \textit{primitive} $m^{th}$ root if $m$ is the least positive integer such that  $\omega^{m}=1$ and
\begin{eqnarray*}
\sum_{r=0}^{m-1} \omega^{r(i-j)}&=& m, \quad  i= j (\ mod \ m )\\ &=& 0, \quad  i\neq j (\ mod \ m ). \end{eqnarray*} \\
More explicitly,
\begin{eqnarray*}
1+ \omega^{r}+(\omega^{r})^{2}+(\omega^{r})^{3}+(\omega^{r})^{4}+......+(\omega^{r})^{m-1}&=& m, \quad  r=0 \\ &=& 0, \quad 0<r\leq m-1. \end{eqnarray*} \\
A ring $R$ with unity is called a \textit{halidon} ring with index $m$ if there is a  primitive $m^{th}$ root of unity and $m$ is invertible in $R$. The ring of integers is a halidon ring with index $ m=1$ and $\omega=1$. The halidon ring with index $1$ is usually called a \textit{trivial} halidon ring. The field of real numbers is a halidon ring with index $m=2$ and $\omega = -1$. The field $\mathbb{Q}$ $(i)=\{ a+ib | a,b \in$ $\mathbb{Q}$ $\}$ is a halidon ring with $\omega=i$ and $m=4$.  $\mathbb{Z}_{p}$ is a halidon ring with index $p-1$ for every prime $p$. Interestingly, $\mathbb{Z}_{p^{k}}$ is also a halidon ring with same index for any integer $k>0$ and it is not a field if $k>1.$ Note that if $\omega$ is a primitive $m^{th}$ root of unity, then $\omega^{-1}$ is also a primitive $m^{th}$ root of unity.
\section{Preliminary results}
In this section, we state some new results and the results essential for constructing the RSA-DFT and RSA-HGR Cryptosystems only. The readers who are interested in the properties of halidon ring, can refer to \cite{at}, \cite{ath} and \cite{ath1}. \\
Let $U(R)$ denote the unit group of $R$.The following theorem will give the necessary and sufficient conditions for a ring to be a halidon ring. The author has used this theorem to develop the computer programme-1 given below.
\begin{theorem} \label{rd2} (A. Telveenus \cite{ath})
A finite commutative ring $R$ with unity is a halidon ring with index $m$ if and only if there is a primitive $m^{th}$ root of unity  $\omega$ such that $m$, $\omega^{d}-1 \in U(R)$; the unit group of $R$ for all divisors $d$ of $m$ and $d<m$.
\end{theorem}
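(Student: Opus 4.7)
The plan is to handle the two implications separately, with the forward direction doing most of the work.

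For the forward direction, suppose $R$ is a halidon ring with index $m$, so $m\in U(R)$ and there is a primitive $m$-th root $\omega$. I would actually establish the slightly stronger fact that $\omega^{k}-1\in U(R)$ for every $k\in\{1,\ldots,m-1\}$ (which subsumes the divisor case). The key object is the DFT-type matrix $V=(\omega^{ij})_{0\le i,j\le m-1}$ together with its conjugate $W=(\omega^{-ij})_{0\le i,j\le m-1}$. The orthogonality property built into the definition of a primitive root yields $VW=mI$, so $\det(V)\det(W)=m^{m}\in U(R)$, and hence $\det(V)\in U(R)$. On the other hand the Vandermonde identity (which is a universal polynomial identity, valid over any commutative ring) gives $\det(V)=\prod_{0\le i<j\le m-1}(\omega^{j}-\omega^{i})$; pulling out the unit factor $\omega^{i}$ from each difference rewrites this as $\det(V)=u\prod_{k=1}^{m-1}(\omega^{k}-1)^{m-k}$ for some unit $u$. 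Since a product of elements in a commutative ring is a unit iff every factor is a unit, each $\omega^{k}-1$ is a unit, and in particular $\omega^{d}-1\in U(R)$ whenever $d\mid m$ and $d<m$.

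For the converse direction I take the hypothesis to mean there is $\omega\in R$ with $\omega^{m}=1$, $m\in U(R)$, and $\omega^{d}-1\in U(R)$ for every proper divisor $d$ of $m$, and verify the two clauses in the paper's definition of a primitive $m$-th root. Minimality $\omega^{k}\ne 1$ for $1\le k<m$ is immediate: Bezout gives $\omega^{\gcd(k,m)}=1$, but $\gcd(k,m)$ is a proper divisor of $m$, contradicting invertibility of $\omega^{\gcd(k,m)}-1$. For the DFT sum it suffices to check $\sum_{r=0}^{m-1}\omega^{rt}=0$ whenever $t\not\equiv 0\pmod{m}$; setting $d=\gcd(t,m)$ and $s=m/d$ and regrouping using $(\omega^{d})^{s}=1$ yields $\sum_{r=0}^{m-1}\omega^{rt}=d\sum_{k=0}^{s-1}(\omega^{d})^{k}$. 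Multiplying the inner geometric sum by the unit $\omega^{d}-1$ telescopes to $\omega^{ds}-1=0$, so the inner sum, and therefore the whole sum, must vanish.

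The main obstacle is the forward direction, and specifically confirming that the Vandermonde factorization can be carried out cleanly in a commutative ring with possibly many zero divisors, and that unit-ness genuinely transfers to each individual factor of the product. Both facts are robust once stated, so the only substantive step is the bridge from the DFT orthogonality condition to invertibility of $\det(V)$ via the matrix identity $VW=mI$; after that, everything is routine bookkeeping with geometric sums and Bezout's identity.
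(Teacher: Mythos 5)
Your argument is correct, but note that the paper itself gives no proof of this theorem: it is quoted from the author's earlier work \cite{ath}, so there is no in-paper argument to compare against. Taken on its own terms, your proposal is sound and self-contained. You also resolve the statement's ambiguity in the right way: reading ``primitive $m^{th}$ root'' in the hypothesis of the converse as ``element of multiplicative order $m$'' (rather than the paper's full orthogonality definition, which would make that direction vacuous) is exactly what makes the theorem a usable criterion, and it matches how Programme-1 applies it. Your converse is clean: Bezout plus invertibility of $\omega^{d}-1$ for proper divisors $d$ gives minimality of $m$, and the regrouping $\sum_{r=0}^{m-1}\omega^{rt}=d\sum_{k=0}^{s-1}(\omega^{d})^{k}$ with $d=\gcd(t,m)$, $s=m/d$, followed by multiplication by the unit $\omega^{d}-1$, correctly forces the sum to vanish. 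For the forward direction, your route through $VW=mI$, $\det V\in U(R)$, and the Vandermonde factorization $\det V=u\prod_{k=1}^{m-1}(\omega^{k}-1)^{m-k}$ is valid over any commutative ring (and yields the stronger conclusion that $\omega^{k}-1\in U(R)$ for all $1\le k\le m-1$), though it can be shortened: from the orthogonality relation $\sum_{k=0}^{m-1}\omega^{rk}=0$ one gets directly
\[
m=\sum_{k=0}^{m-1}\bigl(1-\omega^{rk}\bigr)=(1-\omega^{r})\sum_{k=0}^{m-1}\bigl(1+\omega^{r}+\cdots+\omega^{r(k-1)}\bigr),
\]
so $1-\omega^{r}$ divides the unit $m$ and is therefore a unit, with no determinants needed. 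Neither your argument nor this shortcut uses finiteness of $R$, which is only a standing hypothesis in the statement.
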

\begin{proposition} \label{rd16}
The homomorphic image of a commutative halidon ring with index $m$ and primitive $m^{th}$ of unity $\omega$ is also a halidon ring with index $m$.
\end{proposition}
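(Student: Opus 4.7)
The plan is to let $\phi \colon R \to S$ be the given surjective ring homomorphism (replacing $S$ by $\phi(R)$ if necessary) and to verify that the element $\omega' := \phi(\omega)$ serves as a primitive $m^{\text{th}}$ root of unity in $S$ and that $m$ remains invertible in $S$; commutativity of $S$ is automatic since $S$ is a quotient of the commutative ring $R$. We may assume $1_S \neq 0_S$, else $S$ is the zero ring and the assertion is either trivial or vacuous depending on how one reads the $m = 1$ case.

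Three of the four required properties are immediate. From $m \cdot m^{-1} = 1_R$ we obtain $m \cdot \phi(m^{-1}) = 1_S$, so $m \in U(S)$. From $\omega^m = 1_R$ we get $(\omega')^m = 1_S$. And applying $\phi$ to the orthogonality identity in $R$ yields
\[
\sum_{r=0}^{m-1} (\omega')^{r(i-j)} \;=\; \phi\!\left( \sum_{r=0}^{m-1} \omega^{r(i-j)} \right),
\]
which equals $m$ when $i \equiv j \pmod{m}$ and equals $0_S$ otherwise.

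The delicate point is the minimality of $m$: I must verify that $m$ is the \emph{least} positive integer with $(\omega')^m = 1_S$. Here I would invoke Theorem~\ref{rd2}. Since $R$ is a halidon ring of index $m$, one has $\omega^d - 1 \in U(R)$ for every divisor $d$ of $m$ with $d < m$. Ring homomorphisms send units to units, so $(\omega')^d - 1 \in U(S)$ and in particular is nonzero. If $(\omega')^k = 1_S$ for some $1 \leq k < m$, then $d = \gcd(k,m)$ is a proper divisor of $m$ satisfying $(\omega')^d = 1_S$ by a B\'ezout combination, forcing $(\omega')^d - 1 = 0_S$, a contradiction. Hence the multiplicative order of $\omega'$ is exactly $m$.

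The main obstacle is precisely this minimality step: a generic ring homomorphism can easily collapse a root of unity to one of smaller order, and it is Theorem~\ref{rd2}---which upgrades the bare nondegeneracy $\omega^d - 1 \neq 0$ to the much stronger $\omega^d - 1 \in U(R)$---that ensures the order of $\omega'$ is preserved under $\phi$. With all four properties in hand, $S$ fulfils the definition of a halidon ring of index $m$.
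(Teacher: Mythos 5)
Your argument is correct, and in fact the paper states Proposition~\ref{rd16} without any proof at all (it is listed among the preliminaries, presumably deferred to the author's earlier papers), so there is no in-paper argument to compare against; yours fills the gap. The routine steps (commutativity, invertibility of $m$, $(\omega')^m=1$, and pushing the orthogonality sums through $\phi$) are handled exactly as one would expect, and you correctly isolate the only delicate point, the preservation of the order of $\omega$. The one refinement I would suggest: you invoke Theorem~\ref{rd2} for the fact that $\omega^{d}-1\in U(R)$ for proper divisors $d$ of $m$, but that theorem is stated only for \emph{finite} commutative rings, whereas Proposition~\ref{rd16} carries no finiteness hypothesis. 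The fact you need is true for every halidon ring and follows directly from the defining relations: setting $s=\omega^{d}$ with $1\le d<m$, one has $s^{m}=1$ and $\sum_{r=0}^{m-1}s^{r}=0$, so $(s-1)\sum_{r=0}^{m-1}(r+1)s^{r}=m$, and since $m\in U(R)$ this exhibits $\omega^{d}-1$ as a unit (indeed for every $1\le d<m$, not only divisors). Substituting this one-line computation for the citation makes your proof self-contained and valid in full generality; the caveat about the zero ring is fine as you state it, since for $m>1$ the zero quotient must simply be excluded.
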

\begin{proposition}\label{rd20}
Let R be a commutative halidon ring with index m and let k $>$ 1 be a divisor of m. Then R is also a halidon ring with index k.
\end{proposition}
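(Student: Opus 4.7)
The plan is to exhibit $\omega' := \omega^{m/k}$ as a primitive $k^{th}$ root of unity in $R$ and to observe that $k$ is automatically invertible. Writing $m=kd$ with $d=m/k$, the fact that $m\in U(R)$ immediately gives $k^{-1}=d\,m^{-1}$ and $d^{-1}=k\,m^{-1}$, so both $k$ and $d$ lie in $U(R)$. This takes care of the invertibility requirement in the definition of a halidon ring.

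Next I would check the two conditions in the definition of a primitive root applied to $\omega'$. That $(\omega')^{k}=\omega^{dk}=\omega^{m}=1$ is immediate. Minimality is equally easy: if $(\omega')^{k'}=\omega^{dk'}=1$ for some $0<k'<k$, then $0<dk'<dk=m$ would contradict the minimality of $m$ as the order of $\omega$.

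The substantive step is the orthogonality sum. For $0<r<k$, set $t=dr$ so that $0<t<m$. The primitivity of $\omega$ gives $\sum_{u=0}^{m-1}\omega^{ut}=0$. Reindexing $u=s+qk$ with $s\in\{0,\ldots,k-1\}$ and $q\in\{0,\ldots,d-1\}$, and noting that $\omega^{qkt}=\omega^{qmr}=1$, the sum factors as
\[
0=\sum_{u=0}^{m-1}\omega^{ut}=\sum_{s=0}^{k-1}\omega^{st}\sum_{q=0}^{d-1}\omega^{qkt}=d\sum_{s=0}^{k-1}(\omega')^{sr}.
\]
Multiplying through by $d^{-1}\in R$ yields $\sum_{s=0}^{k-1}(\omega')^{sr}=0$, while the case $r=0$ trivially produces $k$. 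Hence $\omega'$ satisfies the defining conditions of a primitive $k^{th}$ root of unity, and $R$ is a halidon ring with index $k$.

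The main obstacle, such as it is, lies in spotting the reindexing trick above and recognising that the key lever is the joint invertibility of $k$ and $d$, which comes for free from $m\in U(R)$. With that in hand the verification is essentially mechanical, and in particular one need not invoke Theorem~\ref{rd2} (which assumes finiteness) or Proposition~\ref{rd16} to push through the argument in the general commutative setting.
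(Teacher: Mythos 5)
Your argument is correct and complete: $\omega^{m/k}$ has order exactly $k$ by minimality of $m$, the orthogonality sums for it follow from those for $\omega$ via the reindexing $u=s+qk$ and cancellation of the invertible factor $d=m/k$, and the invertibility of $k$ (and of $d$) does indeed come for free from $m\in U(R)$ in a commutative ring. The paper states Proposition~\ref{rd20} without giving a proof (deferring to the earlier halidon-ring references), so there is nothing to compare against line by line; your route via $\omega^{m/k}$ is the natural one, and it has the mild advantage of being self-contained and valid for an arbitrary commutative halidon ring, with no appeal to the finiteness hypothesis of Theorem~\ref{rd2}.
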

In the rest of the section, let $R=\mathbb{Z}_{n}$ be a halidon ring with index $m$ and primitive $m^{th}$ root of unity $\omega$.
\begin{lemma} \label{rd3}
Let $p$ be an odd prime number and $k$ a positive integer. Then
\begin{enumerate}
  \item $U(\mathbb{Z}_{p})=<\omega>$ for some $\omega \in U(\mathbb{Z}_{p})$ with order $p-1$,
  \item $U(\mathbb{Z}_{p^{k}})=<\omega>$ for the same $\omega$ treating as an element in  $U(\mathbb{Z}_{p^{k}})$ with order $\phi(p^{k})$.
\end{enumerate}
\end{lemma}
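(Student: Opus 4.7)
\emph{Part (1).} Since $p$ is prime, $\mathbb{Z}_p$ is a field, so $U(\mathbb{Z}_p)$ is a finite subgroup of the multiplicative group of a field. I would invoke the classical result---proved by counting, using that $x^d = 1$ has at most $d$ roots in a field---that any such subgroup is cyclic, concluding $U(\mathbb{Z}_p) = \langle \omega \rangle$ for some $\omega$ of order $p - 1$.

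\emph{Part (2), setup and base step.} Let $g \in \{1, \ldots, p-1\}$ be an integer representative of $\omega$, viewed in $U(\mathbb{Z}_{p^k})$ via the obvious lift; I would read ``the same $\omega$'' as permitting the substitution $g \mapsto g + p$, which preserves the residue modulo $p$ and hence still represents $\omega$ in $U(\mathbb{Z}_p)$. The key preliminary is that one of $g$ or $g + p$ satisfies $a^{p-1} \not\equiv 1 \pmod{p^2}$. Indeed, the binomial expansion gives $(g+p)^{p-1} \equiv g^{p-1} + (p-1) g^{p-2} p \pmod{p^2}$, and since $\gcd(g, p) = 1$ the correction term is nonzero modulo $p^2$, so $g$ and $g+p$ cannot both reduce to $1$.

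\emph{Part (2), induction.} Fix $a$ with $a^{p-1} = 1 + c_1 p$ and $p \nmid c_1$. I would prove by induction on $k \geq 1$ the strengthened claim
\[
a^{p^{k-1}(p-1)} \equiv 1 + c_k p^k \pmod{p^{k+1}}, \qquad p \nmid c_k.
\]
The step raises both sides to the $p$-th power: the leading correction is $c_k p^{k+1}$, while the next binomial term $\binom{p}{2} c_k^2 p^{2k}$ has $p$-adic valuation $2k+1$ because $p \mid \binom{p}{2}$ for $p$ odd, hence is $\equiv 0 \pmod{p^{k+2}}$ since $2k+1 \geq k+2$ for $k \geq 1$; the subsequent terms are smaller still. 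Thus $c_{k+1} \equiv c_k \not\equiv 0 \pmod{p}$, closing the induction. It now yields $a^{p^{k-2}(p-1)} \not\equiv 1 \pmod{p^k}$, while the order of $a$ in $U(\mathbb{Z}_{p^k})$ is a multiple of $p-1$ (the order of its image in $U(\mathbb{Z}_p)$) and a divisor of $\phi(p^k) = p^{k-1}(p-1)$ by Lagrange. These together force the order to be exactly $\phi(p^k)$, so $a$ generates $U(\mathbb{Z}_{p^k})$.

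\emph{Main obstacle.} The crux is propagating the non-divisibility $p \nmid c_k$ through the induction, which is precisely where the oddness of $p$ enters via $p \mid \binom{p}{2}$. This same computation pinpoints why the statement fails at $p = 2$, where $U(\mathbb{Z}_{2^k})$ is noncyclic for $k \geq 3$. A secondary subtlety, worth flagging in the write-up, is the $g \mapsto g + p$ adjustment needed to interpret ``the same $\omega$'' consistently across all $k$.
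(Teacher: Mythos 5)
Your plan is correct, and it is in fact more careful than the paper's own argument, which is essentially a three-line sketch: the paper writes $\omega^{p-1}=1+lp$, deduces $\omega^{p^{k-1}(p-1)}\equiv 1 \pmod{p^{k}}$ by the binomial theorem, and then claims that if the order $s$ of $\omega$ in $U(\mathbb{Z}_{p^{k}})$ were less than $p^{k-1}(p-1)$, then $s\mid p^{k-1}(p-1)$ would force $s\mid p-1$, contradicting the order modulo $p$. That divisibility step is not valid as written (a proper divisor of $p^{k-1}(p-1)$ that is a multiple of $p-1$ can be $p^{j}(p-1)$ with $0<j<k-1$), and the two subtleties you flag are genuine: the naive lift of a primitive root mod $p$ need not generate $U(\mathbb{Z}_{p^{k}})$ --- for instance $14$ generates $U(\mathbb{Z}_{29})$ but $14^{28}\equiv 1 \pmod{29^{2}}$, so it fails already at $k=2$ --- which is exactly why the replacement $g\mapsto g+p$ and the normalization $a^{p-1}\not\equiv 1 \pmod{p^{2}}$ are needed. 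Your induction tracking $p\nmid c_{k}$ through $p\mid\binom{p}{2}$ is the standard lifting-the-exponent argument, it correctly isolates where oddness of $p$ is used (and why $p=2$ breaks), and combined with the observation that the order is a multiple of $p-1$ dividing $\phi(p^{k})$ it closes precisely the gap the paper leaves open. In short, you and the paper start from the same binomial-expansion idea, but your version supplies the choice of lift and the valuation bookkeeping that the paper's brevity omits, and those additions are necessary rather than optional.
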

\begin{proof}
\begin{enumerate}
  \item Since $\mathbb{Z}_{p}$ is a field, the result follows.
  \item  $\omega^{p-1}=1 \ mod \ p \Rightarrow \omega^{p-1}=1+lp$, where $l$ is an integer and using the binomial expansion, we get $\omega^{p^{k-1}(p-1)}=1 \ mod \ p^{k}$ . Let $s$ be the order of $\omega$ in $U(\mathbb{Z}_{p^{k}})$. Therefore $s \leq p^{k-1}(p-1)$. If $s< p^{k-1}(p-1)$, then $s|p^{k-1}(p-1)$ and this implies $s|p-1$ which is not possible as the order of $\omega $ in $U(\mathbb{Z}_{p})$ is $p-1$. Thus we have the order of $\omega$ in $U(\mathbb{Z}_{p^{k}})$ as $p^{k-1}(p-1)=\phi(p^{k})$.
\end{enumerate}
\end{proof}
\begin{definition}
Let R be a ring and $\alpha \in R$. The element $\alpha$ is called a \textit{primitive element} or \textit{primitive root} if $\alpha$ multiplicatively generates the unit group $U(R)$
 of the ring R.
\end{definition}
\begin{example}
$U(\mathbb{Z}_{5})=<2>$ with order 4  and $U(\mathbb{Z}_{5^{3}})=<2>$ with order $\phi(125)=100$. Clearly, $2 \in \mathbb{Z}_{5^{3}}$ is a primitive root but not a primitive root of unity in $\mathbb{Z}_{5^{3}}$.
\end{example}
\begin{proposition}  \label{rd17}
Let $p$ be an odd prime number. Then $\mathbb{Z}_{p^{k}}$ is a halidon ring with index $m=p-1$ and  $\omega_{1}=\omega^{p^{k-1}}$ is a primitive $m^{th}$ root of unity for positive integers $k\geq 1$.
\end{proposition}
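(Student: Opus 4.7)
The plan is to apply Theorem \ref{rd2} with $m=p-1$ and candidate root $\omega_1=\omega^{p^{k-1}}$, where $\omega\in U(\mathbb{Z}_{p^{k}})$ is the generator supplied by Lemma \ref{rd3}. Under that theorem one must verify three things: (a) $\omega_1$ has multiplicative order exactly $m$; (b) $m\in U(\mathbb{Z}_{p^{k}})$; and (c) $\omega_1^{d}-1\in U(\mathbb{Z}_{p^{k}})$ for every proper divisor $d$ of $m$. Once these are in hand the proposition follows immediately.

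For (a), Lemma \ref{rd3}(2) says $\omega$ has order $\phi(p^{k})=p^{k-1}(p-1)$, so the order of $\omega^{p^{k-1}}$ equals $p^{k-1}(p-1)/\gcd\bigl(p^{k-1}(p-1),p^{k-1}\bigr)=p-1=m$, because $\gcd(p-1,p)=1$. Condition (b) is trivial: $\gcd(p-1,p^{k})=1$, so $m$ is a unit in $\mathbb{Z}_{p^{k}}$.

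The substantive step is (c). Note that $\mathbb{Z}_{p^{k}}$ has zero divisors when $k>1$, so $\omega_1^{d}\neq 1$ is not by itself enough to conclude that $\omega_1^{d}-1$ is a unit; one must show it is coprime to $p$. I would reduce modulo $p$: since $p\equiv 1\pmod{p-1}$, we have $p^{k-1}\equiv 1\pmod{p-1}$, and because $\omega$ reduces mod $p$ to an element of order $p-1$ in the field $\mathbb{Z}_p$ (by Lemma \ref{rd3}(1)), this yields $\omega_1\equiv\omega^{p^{k-1}}\equiv\omega\pmod{p}$. Consequently $\omega_1^{d}\equiv\omega^{d}\pmod{p}$, and since $d<p-1$ is strictly smaller than the order of $\omega$ in $\mathbb{Z}_p$, $\omega^{d}\not\equiv 1\pmod{p}$. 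Hence $\omega_1^{d}-1$ is not divisible by $p$, so it is a unit in $\mathbb{Z}_{p^{k}}$, which settles (c) and completes the application of Theorem \ref{rd2}.

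The main obstacle is precisely (c): the descent from $\mathbb{Z}_{p^{k}}$ to $\mathbb{Z}_{p}$ via the congruence $p^{k-1}\equiv 1\pmod{p-1}$ is the one nontrivial observation, and everything else is bookkeeping with orders of elements.
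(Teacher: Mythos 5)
Your proposal is correct and follows essentially the same route as the paper: verify the criterion of Theorem \ref{rd2} for $\omega_{1}=\omega^{p^{k-1}}$ with $m=p-1$, using Lemma \ref{rd3} for the order of $\omega$ and reduction modulo $p$ (via $p^{k-1}\equiv 1 \pmod{p-1}$, so $\omega_{1}\equiv\omega \pmod p$) to see that $m$ and $\omega_{1}^{d}-1$ are units. Your write-up is somewhat more explicit than the paper's about why unit-ness modulo $p$ lifts to $\mathbb{Z}_{p^{k}}$, but the underlying argument is the same.
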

\begin{proof}
Since $\mathbb{Z}_{p}$ is a halidon ring with  $\omega$ as a primitive $m^{th}$ root of unity, $\omega^{m}=\omega^{p-1}=1$ and $m, \ \omega^{r}-1 \in U(Z_{p})$ for $ r=1,2,3,..,m-1 $. Clearly $\omega_{1}^{m}=1$ and  $\omega_{1}^{r}-1=\omega^{p^{k-1}r} -1 \in U(Z_{p})$ for $r=1,2,3,..m-1$. By lemma \ref{rd3}, $\mathbb{Z}_{p^{k}}$ is a halidon ring with index $m=p-1$.
\end{proof}
The complete characterisation of the halidon property in $\mathbb{Z}_{n}$, where $n$ is odd, is given by the following theorem.
\begin{theorem}\label{rd18}
The ring $\mathbb{Z}_{n}$, where $ \displaystyle n=p_{1}^{e_{1}}p_{2}^{e_{2}}p_{3}^{e_{3}}.....p_{k}^{e_{k}}$ with $2<p_{1}<p_{2}<.....<p_{k}$ is a halidon ring with index $m$ and the primitive $m^{th}$ root of unity $\omega$ if and only if each $ \displaystyle \mathbb{Z}_{p_{i}^{e_{i}}}$ is a halidon ring with index $m$ and primitive $m^{th}$ root of unity $ \displaystyle \omega_{i}=\omega \ mod \ {p_{i}^{e_{i}}} $ for each $i=1,2,3,...,k$.
\end{theorem}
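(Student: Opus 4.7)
The natural tool here is the Chinese Remainder Theorem together with the unit-group characterisation of halidon rings in Theorem \ref{rd2}. Since the $p_i^{e_i}$ are pairwise coprime, the CRT gives a ring isomorphism
\begin{equation*}
\mathbb{Z}_n \;\cong\; \mathbb{Z}_{p_1^{e_1}}\times \mathbb{Z}_{p_2^{e_2}}\times\cdots\times \mathbb{Z}_{p_k^{e_k}},
\end{equation*}
sending $a\bmod n$ to $(a\bmod p_1^{e_1},\dots,a\bmod p_k^{e_k})$. I will use this isomorphism to transport the halidon conditions one component at a time.

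For the forward direction, suppose $\mathbb{Z}_n$ is a halidon ring with index $m$ and primitive $m^{th}$ root $\omega$. Each projection $\pi_i:\mathbb{Z}_n\to\mathbb{Z}_{p_i^{e_i}}$ is a ring epimorphism, so Proposition \ref{rd16} applies and tells me that $\mathbb{Z}_{p_i^{e_i}}$ is a halidon ring with index $m$; tracing through the proof of that proposition, the witnessing primitive root in the image is exactly $\pi_i(\omega)=\omega\bmod p_i^{e_i}=\omega_i$.

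For the reverse direction, I would apply Theorem \ref{rd2}. I need to check that $m$ and $\omega^d-1$ lie in $U(\mathbb{Z}_n)$ for every divisor $d$ of $m$ with $d<m$, and that $\omega$ has order exactly $m$. Under the CRT isomorphism, an element is a unit in $\mathbb{Z}_n$ iff each of its components is a unit in the corresponding $\mathbb{Z}_{p_i^{e_i}}$. By the hypothesis, $m$ and each $\omega_i^d-1=\pi_i(\omega^d-1)$ are units, so $m$ and $\omega^d-1$ are units in $\mathbb{Z}_n$. Likewise $\omega_i^m=1$ componentwise implies $\omega^m=1$ in $\mathbb{Z}_n$.

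The one step that requires a touch of care — and the part I expect to be the main obstacle — is verifying that $m$ is the \emph{least} positive integer with $\omega^m=1$ in $\mathbb{Z}_n$, because primitivity in each factor guarantees a lower bound only in that factor. If $\omega^t=1$ in $\mathbb{Z}_n$ for some $0<t\le m$, then applying each $\pi_i$ gives $\omega_i^t=1$ in $\mathbb{Z}_{p_i^{e_i}}$, and by primitivity of $\omega_i$ the index $m$ divides $t$, forcing $t=m$. Combined with the unit conditions above, Theorem \ref{rd2} then delivers that $\omega$ is a primitive $m^{th}$ root of unity in $\mathbb{Z}_n$ and $\mathbb{Z}_n$ is a halidon ring with index $m$, completing the equivalence.
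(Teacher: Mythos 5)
Your proposal is correct and takes essentially the same route as the paper: the CRT isomorphism $\mathbb{Z}_{n}\cong \prod_{i=1}^{k}\mathbb{Z}_{p_{i}^{e_{i}}}$, with Proposition \ref{rd16} applied to the projections giving the forward direction. The only difference is that you make the converse explicit (componentwise unit conditions and the minimality of the order of $\omega$, fed into Theorem \ref{rd2}), a step the paper simply asserts as part of its ``if and only if''; this is added detail rather than a different method.
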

\begin{proof}
We define a map $ \displaystyle f:\mathbb{Z}_{n} \rightarrow \Pi_{i=1}^{k} \mathbb{Z}_{p_{i}^{e_{i}}}$ by $ \displaystyle f(a)=\Pi_{i=1}^{k}a \ mod \ p_{i}^{e_{i}}$. Clearly $f$ is an isomorphism and using the proposition \ref{rd16}, $ \displaystyle \omega \in U(\mathbb{Z}_{n})$ is a primitive $m^{th}$ root of unity if and only if each $ \displaystyle \omega_{i}=\omega \ mod \ p_{i}^{e_{i}} \in U(\mathbb{Z}_{p_{i}^{e_{i}}})$ is a primitive $m^{th}$ root of unity in $\mathbb{Z}_{p_{i}^{e_{i}}}$. If $\omega_{i}$'s are known, then we can calculate $ \displaystyle \omega \in U(\mathbb{Z}_{n})$ using the Chinese remainder theorem.
\end{proof}
\begin{definition} \label{rd7}
Let $p_{1},p_{2},p_{3},....,p_{k}$ be odd primes and let $\phi(x)$ be the Euler's totient function. We define the \textit{halidon function}  $$\psi(n)= \begin{cases} gcd \{ \phi(p_{1}^{e_{1}}), \phi(p_{2}^{e_{2}}),\phi(p_{3}^{e_{3}}),...., \phi(p_{k}^{e_{k}})\}, & n=p_{1}^{e_{1}}p_{2}^{e_{2}}p_{3}^{e_{3}}.....p_{k}^{e_{k}} \\
1, & n \ \text{is even}\end{cases} $$
\end{definition}
\begin{proposition}
Let $n$ be as in definition \ref{rd7}. Then the halidon function $$\psi(n)=gcd\{ p_{1}-1,p_{2}-1,p_{3}-1,....,p_{k}-1\},$$ which is independent of the exponents $e_{1},e_{2},e_{3},....,e_{k}$.
\end{proposition}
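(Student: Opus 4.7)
The plan is to substitute the identity $\phi(p_i^{e_i}) = p_i^{e_i-1}(p_i-1)$ into the definition of $\psi(n)$ and reduce the statement to a prime-by-prime comparison of valuations. Set $d := \gcd\{p_i^{e_i-1}(p_i-1) : 1 \le i \le k\}$ and $d' := \gcd\{p_i - 1 : 1 \le i \le k\}$; the aim is to prove $d = d'$.

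One direction is essentially free: since $p_i - 1 \mid p_i^{e_i-1}(p_i-1)$ for each $i$, any common divisor of the $p_i - 1$'s is also a common divisor of the $\phi(p_i^{e_i})$'s, giving $d' \mid d$. The substantive content of the proposition is therefore the reverse inclusion $d \mid d'$, which says that the extra prime-power prefactors $p_i^{e_i - 1}$ do not enlarge the gcd.

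For the reverse inclusion I would work prime by prime. Fix an arbitrary prime $q$. If $q$ is distinct from every $p_i$, then $v_q\bigl(p_i^{e_i-1}(p_i-1)\bigr) = v_q(p_i-1)$ for all $i$, so the $q$-valuations of $d$ and $d'$ agree term by term. The only non-trivial case is $q = p_j$ for some index $j$. Here $v_{p_j}(d') = 0$, because the $i = j$ term gives $v_{p_j}(p_j - 1) = 0$. I must then argue the analogous cap on the $d$-side: although the $i = j$ term contributes $v_{p_j}(p_j^{e_j-1}(p_j-1)) = e_j - 1$, the terms with $i \neq j$ satisfy $v_{p_j}(p_i^{e_i-1}(p_i-1)) = v_{p_j}(p_i - 1)$ (because $p_j \neq p_i$), so it suffices to produce one such $i$ giving valuation $0$.

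The main obstacle is securing that last step when $e_j \ge 2$: for each $j$ I must exhibit some $i \neq j$ with $p_j \nmid p_i - 1$. This is where the pairwise distinctness of the odd primes $p_1, \ldots, p_k$, together with the halidon-ring hypotheses from Proposition \ref{rd17} and Theorem \ref{rd18}, would be brought in to preclude $p_j$ from being a universal divisor of the remaining $p_i - 1$'s. Once this cap is established, the $q$-adic valuations of $d$ and $d'$ match at every prime $q$, and the equality $d = d'$ — that is, the claimed independence from the exponents $e_1, \ldots, e_k$ — follows immediately.
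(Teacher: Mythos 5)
Your valuation framework and the easy direction are fine: writing $d=\gcd\{p_i^{e_i-1}(p_i-1)\}$ and $d'=\gcd\{p_i-1\}$, you correctly get $d'\mid d$ and correctly compute $v_{p_j}(d')=0$ and $v_{p_j}(d)=\min\bigl(e_j-1,\ \min_{i\neq j} v_{p_j}(p_i-1)\bigr)$. But the step you yourself flag as the main obstacle --- producing, for each $j$ with $e_j\ge 2$, some $i\neq j$ with $p_j\nmid p_i-1$ --- cannot be secured, and this is a genuine gap rather than a missing technicality. Nothing in Definition \ref{rd7} constrains the primes in this way, and Proposition \ref{rd17} and Theorem \ref{rd18} impose no arithmetic relation among $p_1,\dots,p_k$ that you could import; moreover, when $k=1$ there is no index $i\neq j$ at all. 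Concretely, for $n=3^2\cdot 7=63$ the definition gives $\psi(63)=\gcd(\phi(9),\phi(7))=\gcd(6,6)=6$, while $\gcd(p_1-1,p_2-1)=\gcd(2,6)=2$; likewise $\psi(3^2)=\phi(9)=6\neq 2$. So the identity you set out to prove fails as a bare arithmetic statement about the function in Definition \ref{rd7}: your own valuation formula shows it holds exactly when, for every $j$, either $e_j=1$ or some other $p_i-1$ is prime to $p_j$ (in particular it does hold for squarefree odd $n$), and no proof can close the gap without such an extra hypothesis.

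For context, the paper states this proposition without proof, so there is no argument to compare against; but the way it is used later suggests the intended invariant is the maximal halidon index, which the proof of Theorem \ref{rd1} bounds by showing directly that the index divides each $p_i-1$, yielding $\gcd(p_1-1,\dots,p_k-1)$ independently of the exponents. That quantity is indeed exponent-independent; $\psi(n)$ as literally defined in Definition \ref{rd7} is not, and your write-up would be strengthened by saying so explicitly and either adding the needed hypothesis (e.g. all $e_i=1$) or replacing $\psi$ by the gcd of the $p_i-1$ from the outset.
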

Now we can prove the following theorem, which was previously a conjecture(see \cite{ath1}):
 \begin{theorem}\label{rd1}(A. Telveenus \cite{ath1})
If $R=\mathbb{Z}_{n}$ and $n=p_{1}^{e_{1}}p_{2}^{e_{2}}p_{3}^{e_{3}}.....p_{k}^{e_{k}}$ with primes $p_{1}<p_{2}<p_{3}<....<p_{k}$ including 2, then $R$ is a halidon ring with maximal index $m_{max}=\psi(n)$.
\end{theorem}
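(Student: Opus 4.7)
The strategy is to reduce the computation of the maximal halidon index to local information at each prime power factor of $n$ by invoking theorem \ref{rd18}. I first dispose of the even case: if $2 \mid n$, then every unit $\omega \in U(\mathbb{Z}_n)$ is odd, so $\omega - 1$ is even and hence not a unit in $\mathbb{Z}_n$. Since any candidate halidon index $m \geq 2$ has $d = 1$ as a proper divisor, theorem \ref{rd2} would demand $\omega - 1 \in U(\mathbb{Z}_n)$, which is impossible. Thus $m_{\max} = 1 = \psi(n)$ in the even case.

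For the odd case, write $n = p_1^{e_1} \cdots p_k^{e_k}$ with every $p_i$ odd. By theorem \ref{rd18}, $\mathbb{Z}_n$ is a halidon ring with index $m$ if and only if each $\mathbb{Z}_{p_i^{e_i}}$ is a halidon ring with index $m$, where the $m$th roots are linked by the CRT isomorphism. Combining this with proposition \ref{rd20}, which says that the halidon property for index $m$ descends to every divisor of $m$, the set of admissible halidon indices for $\mathbb{Z}_n$ is exactly the set of common divisors of the maximal indices of the prime-power factors. Hence $m_{\max}(\mathbb{Z}_n) = \gcd_i m_{\max}(\mathbb{Z}_{p_i^{e_i}})$, and it suffices to show that $m_{\max}(\mathbb{Z}_{p^e}) = p - 1$ for every odd prime power $p^e$. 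Existence of the maximal primitive root in $\mathbb{Z}_n$ is then automatic: assemble the local primitive $\psi(n)$th roots (which exist via propositions \ref{rd17} and \ref{rd20}, since $\psi(n) \mid p_i - 1$) and apply the CRT.

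The lower bound $m_{\max}(\mathbb{Z}_{p^e}) \geq p - 1$ is already supplied by proposition \ref{rd17}. The upper bound is where the real work lies: suppose $\omega \in \mathbb{Z}_{p^e}$ is a primitive $m$th root of unity in the halidon sense, and let $\bar\omega \in \mathbb{Z}_p$ be its reduction modulo $p$, with $m' = \mathrm{ord}(\bar\omega)$ in $U(\mathbb{Z}_p)$. Then $m' \mid m$ and $m' \mid p - 1$. If $m' < m$, then $m'$ is a proper divisor of $m$, and theorem \ref{rd2} demands $\omega^{m'} - 1 \in U(\mathbb{Z}_{p^e})$; but by construction $\omega^{m'} \equiv 1 \pmod{p}$, so $p \mid \omega^{m'} - 1$, contradicting the unit property. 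Therefore $m' = m$, forcing $m \mid p - 1$ and in particular $m \leq p - 1$.

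Combining the two bounds gives $m_{\max}(\mathbb{Z}_n) = \gcd_i (p_i - 1) = \psi(n)$ for odd $n$, completing both cases. The main obstacle is this upper-bound argument in $\mathbb{Z}_{p^e}$: since $U(\mathbb{Z}_{p^e})$ contains elements of order up to $\phi(p^e) = p^{e-1}(p - 1)$, one might naively expect halidon indices beyond $p - 1$, but the rigidity of the condition ($\omega^d - 1$ a unit for every proper divisor $d$ of $m$) forces the reduction $\bar\omega$ modulo $p$ to retain the full order $m$, and therefore confines $m$ to divisors of $p - 1$.
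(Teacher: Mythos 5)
Your proposal is correct, and it follows the paper's global skeleton — reduce to the prime-power factors through the CRT characterisation of Theorem \ref{rd18} and then show the local index is bounded by $p_i-1$ — but the decisive local step is argued differently. The paper derives $m \mid p_i-1$ from Lagrange ($\omega_i^{\phi(p_i^{e_i})}=1$, so $m \mid p_i^{e_i-1}(p_i-1)$) together with the remark that ``$m$ is even, $m \nmid p_i^{e_i-1}$''; strictly, evenness alone does not exclude odd factors common to $m$ and $p_i^{e_i-1}$, and what really completes that route is the invertibility of $m$ in $\mathbb{Z}_n$, i.e.\ $\gcd(m,p_i)=1$. You instead exploit the halidon unit condition: if the reduction $\bar\omega$ mod $p$ had order $m'<m$, then $\omega^{m'}-1$ would be divisible by $p$ and hence not a unit, contradicting Theorem \ref{rd2}; so $\bar\omega$ retains order $m$ and $m\mid p-1$. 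Your route buys a cleaner, gap-free upper bound, a genuinely complete even case (the paper's appeal to invertibility of $m$ only rules out even $m$, whereas your observation that $\omega-1$ can never be a unit mod an even $n$ kills every $m\geq 2$), and an explicit attainability argument for $\psi(n)$ via Proposition \ref{rd17}, Proposition \ref{rd20} and the CRT, which the paper leaves implicit. One cosmetic point: Theorem \ref{rd2} guarantees \emph{some} primitive $m^{th}$ root with the unit property, so in the upper-bound step you should either take $\omega$ to be that particular root or note that the defining relation $\sum_{r=0}^{m-1}\omega^{rm'}=0$ together with $\omega^{m'}\equiv 1 \pmod{p}$ would force $p\mid m$, contradicting the invertibility of $m$; either patch is immediate.
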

\begin{proof}
Suppose $n$ is odd. Since the map $ \displaystyle f:\mathbb{Z}_{n} \rightarrow \Pi_{i=1}^{k} \mathbb{Z}_{p_{i}^{e_{i}}}$ by $ \displaystyle f(a)=\Pi_{i=1}^{k}a \ mod \ p_{i}^{e_{i}}$ is an isomorphism, the order of $\omega$, $o(\omega)=m \Leftrightarrow o(\omega_{i})=m$, for $i=1,2,3,..,k.$ Also, $\omega_{i}\in \mathbb{Z}_{p_{i}^{e_{i}}}\Rightarrow \omega_{i}^{\phi(p_{i}^{e_{i}})}=1 \Rightarrow m \mid \phi(p_{i}^{e_{i}})=p_{i}^{e_{i}-1}(p_{i}-1) $. Since $m$ is even, $m \nmid p_{i}^{e_{i}-1} \Rightarrow m \mid p_{i} -1 $ for $i=1,2,3,..,k \Rightarrow  m_{max}=gcd(p_{1}-1, p_{2}-1, p_{3}-1, ....,p_{k}-1)=\psi(n)$ by definition \ref{rd7}. If $n$ is even, then $m=1=\psi(n)$ as $m$ is invertible in $R$.
\end{proof}
From proposition \ref{rd17}, we have $\mathbb{Z}_{p^{k}}$ is a halidon ring with index $m=p-1$ and the primitive $m^{th}$ root of unity $\omega_{1}=\omega^{p^{k-1}}$ for positive integer $k\geq 1$ where $\omega$ is a primitive $m^{th}$ root of unity in $\mathbb{Z}_{p}$.
\begin{lemma} \label{rd11}
Let $p$ be a prime number. Then the number of primitive $k^{th}$ roots of unity in $\mathbb{Z}_{p}$ is $\phi(k)$.
\end{lemma}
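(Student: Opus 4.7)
The plan is to reduce the problem to counting elements of a prescribed order in a cyclic group, exploiting that $\mathbb{Z}_p$ is a field.

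First I would reconcile the paper's stricter definition of a primitive $m^{th}$ root of unity with the elementary notion of ``element of multiplicative order $m$''. In a field, if $\omega \in \mathbb{Z}_p$ has multiplicative order $m$, then for every $r$ with $0 < r < m$ the element $\omega^r - 1$ is a nonzero element of $\mathbb{Z}_p$, hence a unit, so the geometric series gives
\[
\sum_{s=0}^{m-1}(\omega^r)^s \;=\; \frac{(\omega^r)^m - 1}{\omega^r - 1} \;=\; 0,
\]
while for $r = 0$ the sum equals $m$. Conversely, if the two orthogonality conditions in the definition hold then $\omega^m = 1$ with $m$ minimal, so $\omega$ has order exactly $m$. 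Hence, inside the field $\mathbb{Z}_p$, the primitive $k^{th}$ roots of unity are precisely the elements of $U(\mathbb{Z}_p)$ of order $k$.

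Next I would invoke Lemma \ref{rd3}, which states that $U(\mathbb{Z}_p) = \langle \omega \rangle$ is cyclic of order $p-1$. In particular, the existence of any element of order $k$ forces $k \mid p-1$; when $k \nmid p-1$ the count is zero and the statement is vacuous.

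Finally I would apply the classical group-theoretic fact that in a cyclic group of order $n$, for each divisor $d$ of $n$ the number of elements of order $d$ equals $\phi(d)$. Taking $n = p - 1$ and $d = k$ yields exactly $\phi(k)$ primitive $k^{th}$ roots of unity in $\mathbb{Z}_p$, as required. I do not anticipate a significant obstacle; the only nontrivial point is the first step of verifying that in the field setting the paper's orthogonality-based definition collapses to the familiar ``element of order $k$'' condition, after which the result is immediate from cyclic group theory.
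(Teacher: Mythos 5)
Your proposal is correct, and it follows a genuinely different (and in fact more complete) route than the paper. The paper's proof stays inside the halidon framework: it observes that every nonzero element of $\mathbb{Z}_p$ is a primitive $k^{th}$ root of unity for exactly one divisor $k$ of $p-1$, and then appeals to the identity $\sum_{d \mid p-1}\phi(d) = p-1$ to conclude the count is $\phi(k)$; as written this is a sketch, since the partition plus the totient-sum identity alone do not force each class to have exactly $\phi(k)$ elements without the additional input that at most $\phi(k)$ elements can have order $k$ (e.g.\ because $x^k-1$ has at most $k$ roots in a field, or because the group is cyclic). You instead reduce directly to counting elements of order $k$ in the cyclic group $U(\mathbb{Z}_p)$, citing Lemma \ref{rd3}, and quote the standard fact that a cyclic group of order $n$ has exactly $\phi(d)$ elements of order $d$ for each $d \mid n$; this bypasses the counting subtlety entirely. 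You also supply a step the paper silently assumes: that in a field the orthogonality-based definition of a primitive $m^{th}$ root of unity coincides with ``element of multiplicative order $m$,'' via the geometric-series argument with $\omega^r - 1$ invertible for $0 < r < m$ (and $m$ invertible since $m \le p-1$). The only nitpick is your remark that the case $k \nmid p-1$ is ``vacuous'': since $\phi(k) \ge 1$, the lemma as literally stated would be false there, so both you and the paper are implicitly assuming $k \mid p-1$; it would be cleaner to state that hypothesis explicitly rather than call the exceptional case vacuous.
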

\begin{proof}
Since $\mathbb{Z}_{p}$ is a field, it is a halidon ring with maximum index $p-1$. It is clear that every non zero element in $\mathbb{Z}_{p}$ is a primitive $k^{th}$ root of unity for some positive integer $k| p-1$. It is well known that $$\sum _{d|p-1}\phi(d)=p-1.$$ $\therefore \quad $ the number of  $k^{th}$ root of unity in $\mathbb{Z}_{p}$ is $\phi(k).$
\end{proof}
\begin{theorem} \label{rd19}
Let $n=p_{1}^{e_{1}}p_{2}^{e_{2}}p_{3}^{e_{3}}.....p_{k}^{e_{k}}$ such that $e_{i}>0$ are integers and $p_{i}=mt_{i}+1;$ where $m$ is the maximum index of the halidon ring $\mathbb{Z}_{n}$ and $t_{i}$'s are relatively prime for all $i=1,2,3...,k$. Then the number of primitive $m^{th}$ root of unity in $\mathbb{Z}_{n}$ is $[\phi(m)]^{k}$.
\end{theorem}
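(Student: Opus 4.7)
The plan is to reduce the count over $\mathbb{Z}_n$ to a count over each factor $\mathbb{Z}_{p_i^{e_i}}$ via the Chinese Remainder isomorphism, then show that in each factor the count is exactly $\phi(m)$, and finally multiply.

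First, I would invoke the isomorphism $f:\mathbb{Z}_n\rightarrow \prod_{i=1}^k \mathbb{Z}_{p_i^{e_i}}$ already used in the proof of Theorem \ref{rd18}. Combined with Proposition \ref{rd16} (and the observation there that $\omega$ is a primitive $m^{th}$ root of unity in $\mathbb{Z}_n$ iff each component $\omega_i = \omega \bmod p_i^{e_i}$ is a primitive $m^{th}$ root of unity in $\mathbb{Z}_{p_i^{e_i}}$), counting primitive $m^{th}$ roots of unity in $\mathbb{Z}_n$ reduces to a product of independent counts, one in each $\mathbb{Z}_{p_i^{e_i}}$. So it suffices to show that, for each $i$, the number of primitive $m^{th}$ roots of unity in $\mathbb{Z}_{p_i^{e_i}}$ is exactly $\phi(m)$.

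Fix one prime $p=p_i$ with $p=mt_i+1$ and exponent $e=e_i$. By Lemma \ref{rd3}, $U(\mathbb{Z}_{p^e})$ is cyclic of order $\phi(p^e)=p^{e-1}(p-1)$, and $m\mid p-1$ divides this order, so the cyclic group contains exactly $\phi(m)$ elements of order $m$. The real content of this step is to show that every such element of order $m$ is in fact a \emph{primitive} $m^{th}$ root of unity in the sense of the paper; the definition demands the stronger cyclotomic condition, equivalent by Theorem \ref{rd2} to the units-hood of $\omega^d-1$ for every proper divisor $d$ of $m$. For this I would use the reduction map $\pi:U(\mathbb{Z}_{p^e})\twoheadrightarrow U(\mathbb{Z}_p)$: a generator $g$ of $U(\mathbb{Z}_{p^e})$ reduces to a generator $\pi(g)$ of $U(\mathbb{Z}_p)$, and writing $\omega=g^{p^{e-1}(p-1)/m}$, a short gcd computation shows $\pi(\omega)$ still has order $m$ in $U(\mathbb{Z}_p)$. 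Hence for $d\mid m$ with $d<m$, $\pi(\omega)^d\neq 1$, i.e.\ $\omega^d-1\not\equiv 0\pmod p$, so $\omega^d-1$ is a unit in $\mathbb{Z}_{p^e}$; Theorem \ref{rd2} then certifies $\omega$ as a primitive $m^{th}$ root of unity.

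Putting these together, each $\mathbb{Z}_{p_i^{e_i}}$ contributes exactly $\phi(m)$ primitive $m^{th}$ roots of unity, and by the bijection induced by $f$ the total count in $\mathbb{Z}_n$ is $\prod_{i=1}^k \phi(m)=[\phi(m)]^k$. (The hypothesis that the $t_i$ are coprime is used implicitly only to guarantee that $m=\psi(n)$ is indeed the maximum index, via Theorem \ref{rd1}; the counting argument itself needs only $m\mid p_i-1$.) The main obstacle, I expect, is precisely the step that order-$m$ in $U(\mathbb{Z}_{p^e})$ already implies the full primitivity condition of the paper — everything else is a standard CRT-plus-cyclic-group head count.
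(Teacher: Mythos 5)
Your proposal is correct, and its skeleton (CRT reduction via Theorem \ref{rd18}, then an independent count of $\phi(m)$ in each component, then multiply) is the same as the paper's; the difference lies in how the per-component count is justified. The paper's proof is a one-line citation of Theorem \ref{rd18}, Theorem \ref{rd1} and Lemma \ref{rd11}, and Lemma \ref{rd11} only counts primitive $k^{th}$ roots of unity in the field $\mathbb{Z}_p$, whereas the CRT components are $\mathbb{Z}_{p_i^{e_i}}$ with possibly $e_i>1$; bridging that gap is left implicit (one would have to route it through Proposition \ref{rd17} or Lemma \ref{rd3}). You instead argue directly in $\mathbb{Z}_{p^e}$: cyclicity of $U(\mathbb{Z}_{p^e})$ (Lemma \ref{rd3}) gives exactly $\phi(m)$ elements of multiplicative order $m$, and you then verify that each of these satisfies the paper's stronger primitivity condition by reducing mod $p$ and invoking Theorem \ref{rd2} ($m$ and $\omega^d-1$ units for every proper divisor $d\mid m$). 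This buys a complete treatment of the prime-power case and makes explicit the point the paper glosses over, namely that ``order $m$'' already forces halidon primitivity here; together with the trivial converse (a primitive $m^{th}$ root has order $m$ by definition) the count in each component is exactly $\phi(m)$. Two small presentational points: your reduction-mod-$p$ argument is phrased for the particular element $\omega=g^{p^{e-1}(p-1)/m}$, but it should be stated for an arbitrary element of order $m$ (immediate, since the kernel of $U(\mathbb{Z}_{p^e})\to U(\mathbb{Z}_p)$ has $p$-power order coprime to $m$, so the image still has order $m$); and your remark that the coprimality of the $t_i$'s only serves to make $m=\psi(n)$ the maximal index, via Theorem \ref{rd1}, matches the role that hypothesis plays in the paper.
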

\begin{proof}
The result follows from theorem \ref{rd18}, theorem \ref{rd1} and lemma \ref{rd11}.
\end{proof}
The following computer code based on theorem \ref{rd2} computes primitive $m^{th}$ root of unity in the halidon ring $\mathbb{Z}_{n}$ and verifies theorem \ref{rd19}.  
\begin{verbatim}
Programme-1: To check whether Z(n) is a trivial or nontrivial halidon ring.
#include <iostream>
#include <cmath>
using namespace std;
int main() {
	cout << "To check whether Z(n) is a trivial or nontrivial halidon ring." << endl;
	unsigned long long int t = 0, n = 1, w = 1, hcf, hcf1,
		d = 1, k = 1, q = 1, p = 1, b=0, c=0, temp = 1;
	cout << "Enter an integer n >0: ";
	cin >> n;
	if (n % 2 == 0) {
		cout << "Z(" << n << ") is a trivial halidon ring." << endl;
	}
		for (w = 1; w < n; ++w) {
		for (int i = 1; i <= n; ++i) {
			if (w % i == 0 && n % i == 0) {
				hcf = i;

		}
		} if (hcf == 1) {
			++t; // cout << "  " << w << "  ";
		}
	}
	for (w = 1; w < n; ++w) {
		for (int i = 1; i <= n; ++i) {
			if (w % i == 0 && n % i == 0) {
				hcf = i;

			}
		}
		if (hcf == 1) {
		for (int k = 1; k <= t; ++k) {
		q = q * w; q = q % n;
		if (q == 1) { if (temp <= k) { temp = k; } break; }
		}
		}
	}
	
	for (w = 2; w < n; ++w) {
		for (int i = 1; i <= n; ++i) {
		if (w % i == 0 && n % i == 0) {
		hcf = i;
			}
		}
		if (hcf == 1) {
		for (int k = 1; k <= temp; ++k) {
		q = q * w; q = q % n; if (q == 1) {
		for (int i = 1; i <= n; ++i) {
		if (k % i == 0 && n % i == 0) {
		hcf1 = i;
		}
		}
		if (hcf1 == 1) {
		for (int j = 1; j < k; ++j)
		{
		if (k%j == 0) {
		d = j;
		for (int l = 1; l <= d; ++l)
		{
		p = (p*w); p = p % n;
		}
		for (int i = 1; i <= n; ++i) {
		if ((p - 1) % i == 0 && n % i == 0) {
		hcf = i;
		}
		}
             if (hcf == 1) {

		p = 1; b = b + 1;
		}
		else p = 1;
		c = c + 1;
     		}
		}
		if (c == b) { cout << "   Z(" << n << ")" <<
							
" is a halidon ring with index m= " << k <<

" and w= " << w << "."<< endl; } {p = 1; c = 0; b = 0; }

		break;
		}
		}
		}
		}
	} return 0;
}
\end{verbatim}

 \section{Discrete Fourier Transforms}
In this section, we deal with the ring of polynomials over a halidon ring which has an application in Discrete Fourier Transforms \cite{jj}. Throughout this section, let $R$ be a finite commutative halidon ring with index $m$ and $R[x]$ denotes the ring of polynomials degree less than $m$ over $R$. Also, refer to \cite{ath1}.
\begin{definition} \cite{jj}
Let $\omega \in R$ be a primitive $m^{th}$ root of unity in $R$ and let $f(x)=\sum \limits_{j=0}^{m-1} f_{j}x^{j} \in R[x]$ with its coefficients vector $(f_{0},f_{1},f_{2},....,f_{m-1}) \in R^{m}$. The \textbf{Discrete Fourier Transform} (DFT) is a map $$ DFT_{\omega}: R[x]\rightarrow R^{m}$$ defined by $$DFT_{\omega}(f(x))=(f_{0}(1),f_{1}(\omega),f_{2}(\omega^{2}),....,f_{m-1}(\omega^{m-1})).$$
\end{definition}\noindent
\begin{remark}
Clearly $DFT_{\omega}$ is a $R$-linear map as $DFT_{\omega}(af(x)+bg(x))=aDFT_{\omega}(f(x))+bDFT_{\omega}(g(x))$ for all $ a, b \in R$. Also, if $R= \mathbb{C}$, the field of complex numbers, then $\omega=cos(\frac{2 \pi}{m})+i sin(\frac{2 \pi}{m})=e^{i\frac{2 \pi}{m}}$ and the Fourier series will become the ordinary series of sin and cos functions.
\end{remark}
\begin{definition} \cite{jj}
The \textbf{convolution} of $f(x)=\sum \limits_{j=0}^{m-1} f_{j}x^{j}$ and $g(x)=\sum \limits_{k=0}^{m-1} g_{k}x^{k}$ in $R[x]$ is defined by $h(x)=f(x)*g(x)= \sum \limits_{l=0}^{m-1} h_{l}x^{l} \in R[x]$ where  \quad $h_{l}=\sum \limits_{j+k=l \ mod \ m} f_{j}g_{k}=\sum \limits_{j=0}^{m-1} f_{j}g_{l-j}$ for $0 \leq l < m$.
\end{definition}
The notion of convolution is equivalent to polynomial multiples in the ring $R[x]/<x^{m}-1>$. The $l^{th}$ coefficient of the product $f(x)g(x)$ is $\sum \limits_{j+k=l \ mod \ m} f_{j}g_{k}$ and hence $$f(x)*g(x)= f(x)g(x) \ mod (x^{m}-1).$$
\begin{proposition} \cite{jj}
For polynomials $f(x),g(x) \in R[x],$ $DFT_{\omega}(f(x)*g(x))=DFT_{\omega}(f(x)).DFT_{\omega}(g(x)),$ where . denotes the pointwise multiplication of vectors.
\end{proposition}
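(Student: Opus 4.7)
The plan is to verify the identity coordinate by coordinate. Fix an index $s \in \{0,1,\dots,m-1\}$; the $s$-th entry of the claimed equality asserts that
\[
(f*g)(\omega^s) \;=\; f(\omega^s)\, g(\omega^s),
\]
and this immediately gives the proposition, since $DFT_\omega$ is defined by evaluating at the powers $\omega^0,\omega^1,\dots,\omega^{m-1}$ and pointwise multiplication means entry-by-entry.

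The cleanest route is conceptual and uses the observation recorded just before the statement, namely that $f*g \equiv f(x)g(x) \pmod{x^m-1}$. Because $\omega^m=1$, every $\omega^s$ is a root of $x^m-1$, since $(\omega^s)^m=(\omega^m)^s=1$. Evaluating both sides of the congruence at $\omega^s$ therefore yields $(f*g)(\omega^s)=f(\omega^s)g(\omega^s)$, which is precisely what is needed.

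If a direct computation is preferred, I would instead start from $h_l=\sum_{j+k\equiv l\,(\mathrm{mod}\,m)} f_j g_k$, form
\[
h(\omega^s) \;=\; \sum_{l=0}^{m-1} h_l\, \omega^{ls} \;=\; \sum_{l=0}^{m-1} \omega^{ls} \sum_{\substack{0\le j,k<m\\ j+k\equiv l\,(\mathrm{mod}\,m)}} f_j g_k,
\]
interchange the order of summation, and then use $\omega^m=1$ to replace $\omega^{ls}$ by $\omega^{(j+k)s}$ inside the inner sum (valid precisely because $l\equiv j+k\pmod m$). The resulting unconstrained double sum factors as
\[
\sum_{j=0}^{m-1}\sum_{k=0}^{m-1} f_j g_k \,\omega^{(j+k)s} \;=\; \Bigl(\sum_{j=0}^{m-1} f_j \omega^{js}\Bigr)\Bigl(\sum_{k=0}^{m-1} g_k \omega^{ks}\Bigr) \;=\; f(\omega^s)\,g(\omega^s).
\]

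There is no real obstacle here; the only point requiring any care is the passage from $\omega^{ls}$ to $\omega^{(j+k)s}$, which is where the halidon hypothesis enters (through $\omega^m=1$). Notably, I will not need the orthogonality identities built into the definition of a primitive $m$-th root, nor the invertibility of $m$, because this is a multiplicative identity rather than an inversion formula for $DFT_\omega$.
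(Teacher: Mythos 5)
Your argument is correct and its main route is essentially the paper's own proof: both write $f(x)*g(x)=f(x)g(x)+q(x)(x^m-1)$ and evaluate at the powers $\omega^s$, which annihilate $x^m-1$ since $\omega^m=1$. The direct coefficient computation you sketch as an alternative is a fine (slightly more explicit) variant, but it adds nothing beyond the congruence argument already used in the paper.
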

\begin{proof}
$f(x)*g(x)= f(x)g(x) +q(x)(x^{m}-1)$ for some $q(x) \in R[x]$. \\
Replace $x$ by $\omega^{j}$, we get \\
$$f(\omega^{j})*g(\omega^{j})= f(\omega^{j})g(\omega^{j}) +0.$$
$$\therefore \quad \quad \quad DFT_{\omega}(f(x)*g(x))=DFT_{\omega}(f(x)).DFT_{\omega}(g(x)). $$
\end{proof}
\begin{theorem} \label{rd9}
For a polynomial $f(x) \in R[x],$ $DFT_{\omega}^{-1}(f(x))= \frac{1}{m}DFT_{\omega^{-1}}(f(x)).$
\end{theorem}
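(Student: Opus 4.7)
The plan is to identify $R^m$ with $R[x]$ (polynomials of degree less than $m$) via the coefficient map, so that $DFT_\omega$ becomes an $R$-linear endomorphism of $R[x]$, and to prove the stronger identity $DFT_{\omega^{-1}} \circ DFT_{\omega} = m \cdot \mathrm{Id}$. Since $m \in U(R)$ (by the halidon hypothesis), this would immediately give the claimed formula for $DFT_{\omega}^{-1}$. Note that by the remark right before the ``Preliminary results'' section, $\omega^{-1}$ is again a primitive $m^{\text{th}}$ root of unity, so $DFT_{\omega^{-1}}$ is well-defined in exactly the same sense as $DFT_\omega$.

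The key computation will be the following. Write $f(x) = \sum_{j=0}^{m-1} f_j x^j$, so that $DFT_{\omega}(f) = (f(\omega^k))_{k=0}^{m-1}$ corresponds to the polynomial $g(x) = \sum_{k=0}^{m-1} f(\omega^k)\, x^k$. Applying $DFT_{\omega^{-1}}$ and evaluating the $l^{\text{th}}$ component yields
\begin{equation*}
g(\omega^{-l}) \;=\; \sum_{k=0}^{m-1} f(\omega^k)\, \omega^{-lk} \;=\; \sum_{j=0}^{m-1} f_j \sum_{k=0}^{m-1} \omega^{(j-l)k}.
\end{equation*}
At this point I would invoke the defining orthogonality property of a primitive $m^{\text{th}}$ root of unity stated at the very beginning of the paper: the inner sum equals $m$ when $j \equiv l \pmod m$ and $0$ otherwise. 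Since $0 \le j,l < m$, this collapses the double sum to $m f_l$, i.e.\ $DFT_{\omega^{-1}}(DFT_{\omega}(f)) = m \cdot (f_0, \dots, f_{m-1})$, which as a polynomial is $m f(x)$.

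Multiplying through by the unit $m^{-1} \in R$ then gives $\frac{1}{m} DFT_{\omega^{-1}}(DFT_\omega(f)) = f(x)$ for every $f \in R[x]$; $R$-linearity of both $DFT_\omega$ and $DFT_{\omega^{-1}}$ (already recorded in the Remark after the DFT definition) then forces $DFT_\omega$ to be a bijection with $DFT_\omega^{-1} = \frac{1}{m}\, DFT_{\omega^{-1}}$. I do not anticipate any real obstacle: the only non-routine ingredient is the orthogonality relation, and that is built into the paper's very definition of a primitive $m^{\text{th}}$ root in a ring. The mild care needed is simply to (i) make the identification $R^m \simeq R[x]$ explicit so that both sides of the stated equality live in the same space, and (ii) observe that invertibility of $m$ in the halidon ring $R$ legitimises the division by $m$.
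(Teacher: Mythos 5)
Your argument is correct, and it is worth noting how it differs from the paper's. The paper proves Theorem \ref{rd9} by writing out the matrix $\phi$ of $DFT_{\omega}$ as a Vandermonde-type matrix and then citing the classical fact (from Davis's book on circulant matrices) that its inverse is $\frac{1}{m}\phi^{*}$, where $\phi^{*}$ is the transposed ``conjugate'' matrix, together with the remark that the conjugate of $\omega$ should be $\omega^{-1}$; this is short, but the appeal to conjugation is really a complex-analytic notion and is not justified inside an abstract halidon ring. You instead verify directly that $DFT_{\omega^{-1}}\circ DFT_{\omega}=m\cdot\mathrm{Id}$ by exchanging the order of summation and invoking the orthogonality relation $\sum_{k=0}^{m-1}\omega^{(j-l)k}=m$ or $0$, which is literally the paper's definition of a primitive $m^{\text{th}}$ root of unity in a ring; since $m\in U(R)$ and $\omega^{-1}$ is again a primitive $m^{\text{th}}$ root, the two-sided invertibility follows (either by running the same computation with $\omega^{-1}$ in place of $\omega$, or by the standard fact that a one-sided inverse of a square matrix over a commutative ring is two-sided). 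So your route is self-contained and rigorous precisely where the paper's citation is loose, at the cost of a slightly longer computation; the paper's route is more compact but implicitly relies on the same orthogonality identity to make the Vandermonde inversion valid over $R$. The only small point to make explicit in a final write-up is the identification of $R^{m}$ with $R[x]$ so that composing the two transforms makes sense, which you have already flagged.
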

\begin{proof}
The matrix of the transformation $DFT_{\omega}(f(x))$ is $$[DFT_{\omega}(f(x))]=\phi=\left(
                                                                        \begin{array}{ccccc}
                                                                          1 & 1 & 1 & ..... & 1 \\
                                                                          1 & \omega & \omega^{2} & ..... & \omega^{m-1} \\
                                                                          1 & \omega^{2} &(\omega^{2})^{2} & ..... & (\omega^{2})^{m-1}\\
                                                                          . & . & . & ..... & . \\
                                                                          . & . & . & ..... & . \\
                                                                          . & . & . & ..... & . \\
                                                                          1 & \omega^{m-1} & (\omega^{m-1})^{2} & ..... & (\omega^{m-1})^{m-1}\\
                                                                        \end{array}
                                                                      \right)
$$ The matrix $\phi$ is the well known Vandermonde matrix and its inverse is $\frac{1}{m}\phi^{*}$, where $\phi^{*}$ is the matrix transpose conjugated \cite{pjd}. Since $\phi$ is a square matrix and the conjugate of $\omega$ is $\omega^{-1}$, we have  $DFT_{\omega}^{-1}(f(x))= \frac{1}{m}DFT_{\omega^{-1}}(f(x)).$
\end{proof}
\begin{example} \label{rd6}
We know that $R=Z_{49}$ is a halidon ring with index $m=6$ and $\omega=19$. Also, $\omega^{-1}=\omega^{5}=31$. Let $f(x)=2+x+2x^{2}+3x^{3}+5x^{4}+10x^{5} \in R[x].$ Then $DFT_{\omega}(f(x))$ can be expressed as \newline $ \left(
               \begin{array}{c}
                 F_{0} \\
                 F_{1} \\
                 F_{2} \\
                 F_{3} \\
                 F_{4} \\
                 F_{5} \\
               \end{array}
             \right)$ $=$ $\left(
                         \begin{array}{cccccc}
                           1 & 1 & 1 & 1 &1 & 1 \\
                           1 & \omega & \omega^{2} &\omega^{3} & \omega^{4} & \omega^{5} \\
                           1 & \omega^{2} & \omega^{4} &1& \omega^{2} & \omega^{4} \\
                           1 & \omega^{3} & 1 &\omega^{3} & 1 & \omega^{3} \\
                           1 & \omega^{4} & \omega^{2} &1 & \omega^{4} & \omega^{2} \\
                           1 & \omega^{5} & \omega^{4} &\omega^{3} & \omega^{2} & \omega \\
                         \end{array}
                       \right)$  $ \left(
               \begin{array}{c}
                 f_{0} \\
                 f_{1} \\
                 f_{2} \\
                 f_{3} \\
                 f_{4} \\
                 f_{5} \\
               \end{array}
             \right)$
     $\Rightarrow$
     $ \left(
               \begin{array}{c}
                 F_{0} \\
                 F_{1} \\
                 F_{2} \\
                 F_{3} \\
                 F_{4} \\
                 F_{5} \\
               \end{array}
             \right)$ $=$ $ \left(
               \begin{array}{c}
                23  \\
                24 \\
                32  \\
                 44 \\
                 9 \\
                 27 \\
               \end{array}
             \right)$
       $ \left(
               \begin{array}{c}
                 f_{0} \\
                 f_{1} \\
                 f_{2} \\
                 f_{3} \\
                 f_{4} \\
                 f_{5} \\
               \end{array}
             \right)$ $=6^{-1}$$\left(
                         \begin{array}{cccccc}
                           1 & 1 & 1 & 1 &1 & 1 \\
                           1 & \omega^{5} & \omega^{4} &\omega^{3} & \omega^{2} & \omega \\
                           1 & \omega^{4} & \omega^{2} &1& \omega^{4} & \omega^{2} \\
                           1 & \omega^{3} & 1 &\omega^{3} & 1 & \omega^{3} \\
                           1 & \omega^{2} & \omega^{4} &1 & \omega^{2} & \omega^{4} \\
                           1 & \omega & \omega^{2} &\omega^{3} & \omega^{4} & \omega^{5} \\
                         \end{array}
                       \right)$  $ \left(
               \begin{array}{c}
                 F_{0} \\
                 F_{1} \\
                 F_{2} \\
                 F_{3} \\
                 F_{4} \\
                 F_{5} \\
               \end{array}
             \right)$
             \newline
      $\Rightarrow$     $ \left(
               \begin{array}{c}
                 f_{0} \\
                 f_{1} \\
                 f_{2} \\
                 f_{3} \\
                 f_{4} \\
                 f_{5} \\
               \end{array}
             \right)$ $=41$$\left(
                         \begin{array}{cccccc}
                           1 & 1 & 1 & 1 &1 & 1 \\
                           1 & 31 & 30 &48 & 18 & 19 \\
                           1 & 30 & 18 &1& 30 & 18 \\
                           1 & 48 & 1 &48 & 1 & 48 \\
                           1 & 18 & 30 &1 & 18 & 30 \\
                           1 & 19& 18 &48 & 30 & 31 \\
                         \end{array}
                       \right)$  $ \left(
               \begin{array}{c}
                 F_{0} \\
                 F_{1} \\
                 F_{2} \\
                 F_{3} \\
                 F_{4} \\
                 F_{5} \\
               \end{array}
             \right)$     \newline
If $ \left(
               \begin{array}{c}
                 F_{0} \\
                 F_{1} \\
                 F_{2} \\
                 F_{3} \\
                 F_{4} \\
                 F_{5} \\
               \end{array}
             \right)$     $=$ $ \left(
               \begin{array}{c}
                 23 \\
                 24 \\
                 32 \\
                 44 \\
                 9 \\
                 27 \\
               \end{array}
             \right)$, then a direct calculation gives $ \left(
               \begin{array}{c}
                 f_{0} \\
                 f_{1} \\
                 f_{2} \\
                 f_{3} \\
                 f_{4} \\
                 f_{5} \\
               \end{array}
             \right)$ $=$  $ \left(
               \begin{array}{c}
                2 \\
                1 \\
                 2 \\
                 3 \\
                 5 \\
                 10 \\
               \end{array}
             \right)$ \newline as expected.

\end{example}
Programmes 2 and 3  will enable us to calculate Discrete Fourier Transform and its inverse. We can cross-check the programmes against example \ref{rd6}. \\
\begin{verbatim}
Programme-2: Discrete Fourier Transform

#include <iostream>
#include<cmath>
using namespace std;
int main()
{
cout << "Discrete Fourier Transform" << endl;
unsigned long long int a[1000][1000], b[1000][1000],
mult[1000][1000],	q=1,m=1, n=1, w2=1,w=1, r1, c1, r2,
c2, i, j, k, t=1;
cout << "Enter n,m,w: ";
	cin >> n >> m >> w;
	r1 = m; c1=m;
	r2 = m; c2=1;
	for (i = 0; i < r1; ++i)
	for (j = 0; j < c1; ++j)
		{
			 t = (i*j)%m;
	 if (t == 0) a[i][j] = 1;
	 else
	for (q = 1; q < t + 1; ++q) { w2 = (w2 * w) % n; }
			 a[i][j] = w2; w2 = 1;
		}
	for (i = 0; i < r1; ++i)
		for (j = 0; j < c1; ++j)
	   {
		cout<<"  a"<<i+1<<" "<<j+1<<"="<<	a[i][j] ;
		if (j == c1 - 1)
			cout << endl;
		}
	cout << endl << "Enter coefficient vector of
    the polynomial:" << endl;
	for (i = 0; i < r2; ++i)
	for (j = 0; j < c2; ++j)
		{
    cout << "Enter element f" << i << " = ";
	        cin >> b[i][j];
		}
	for (i = 0; i < r1; ++i)
	for (j = 0; j < c2; ++j)
		{
    		mult[i][j] = 0;
		}
	for (i = 0; i < r1; ++i)
	for (j = 0; j < c2; ++j)
	for (k = 0; k < c1; ++k)
		{
    		mult[i][j] += (a[i][k]) * (b[k][j]);
		}
	cout << endl << "DFT Output: " << endl;
	for (i = 0; i < r1; ++i)
	for (j = 0; j < c2; ++j)
		{
			cout << "F"<< i << "="<< mult[i][j]%n;
	if (j == c2 - 1)
			cout << endl;
		}
	return 0;
        }


Programme-3: Inverse Discrete Fourier Transform

#include <iostream>
#include<cmath>
using namespace std;
int main()
{
	cout << "Inverse Discrete Fourier Transform" << endl;
	unsigned long long int a[1000][1000], b[1000][1000],
    mult[1000][1000], p=1, q=1, l=1, m = 1, m1 = 1, w1 = 1,
    w2=1, n = 1, w = 1, r1, c1, r2, c2, i, j, k,
    c=1,t = 1;
	cout << "Enter n,m, w: ";
	cin >> n >> m >> w;
	for (l = 1; l < n; ++l)
        {
		c = (l * m) % n;
		if (c == 1)
        {
			m1 = l;
		}
	    }
	for (p = 1; p < m; ++p)
    	{
		w1 = (w1 * w) % n;
	    }
	r1 = m; c1 = m;
	r2 = m; c2 = 1;
	for (i = 0; i < r1; ++i)
	for (j = 0; j < c1; ++j)
		{
			t = (i * j) % m;
	if (t == 0) a[i][j] = 1;
	else
	for (q = 1; q < t + 1; ++q) { w2 = (w2 * w1) % n; }
			a[i][j] = w2; w2 = 1;
		}
	for (i = 0; i < r1; ++i)
	for (j = 0; j < c1; ++j)
		{
	cout << "  a" << i + 1 << j + 1 << "=" << a[i][j];
	if (j == c1 - 1)
	cout << endl;
		}
	cout << endl << "Enter DFT vector :" << endl;
	for (i = 0; i < r2; ++i)
	for (j = 0; j < c2; ++j)
		{
	cout << "Enter element F" << i << " = ";
	cin >> b[i][j];
		}
	for (i = 0; i < r1; ++i)
	for (j = 0; j < c2; ++j)
		{
	mult[i][j] = 0;
		}
	for (i = 0; i < r1; ++i)
	for (j = 0; j < c2; ++j)
	for (k = 0; k < c1; ++k)
		{
	mult[i][j] += (a[i][k]) * (m1 * b[k][j]);
   		}
	cout << endl << "Polynomial vector: " << endl;
	for (i = 0; i < r1; ++i)
	for (j = 0; j < c2; ++j)
		{
	cout << "f" << i << "=" << mult[i][j] % n;
	if (j == c2 - 1)
	cout << endl;
		}
	return 0;
        }

\end{verbatim}

If $R=Z_{100001}$, $m=10$, $\omega=26364 $ and $f(x)=1+2x+3x^{2}+4x^{3}+5x^{4}+6x^{5}+7x^{6}+8x^{7}+9x^{8}+x^{9} \in R[x]$, then $ \left( \begin{array}{c}
                 F_{0} \\
                 F_{1} \\
                 F_{2} \\
                 F_{3} \\
                 F_{4} \\
                 F_{5} \\
                 F_{6} \\
                 F_{7} \\
                 F_{8} \\
                 F_{9} \\
               \end{array}
             \right)$     $=$ $ \left(
               \begin{array}{c}
                 46 \\
                 19019 \\
                 3314 \\
                 10082 \\
                 48017 \\
                 4 \\
                80347 \\
                 18172 \\
                 68413 \\
                 52627 \\
               \end{array}
             \right)$. \newline
Also, we can verify the inverse DFT using the above data. \\ The following proposition from number theory is very useful in the next section.
\begin{proposition} \cite{jl} \label{rd8}
Let $n=p_{i}^{e_{1}}p_{2}^{e_{2}}....p_{k}^{e_{k}}$ be the standard form of the integer $n$ and let d,e satisfy $ed\equiv 1 \ mod \ \phi(n)$. Then for all integer $x$,
$$x^{ed}\equiv x \ mod \ n.$$ Therefore, if $c=x^{e} \mod \ n$, we have $x \equiv c^{d}\ mod \ n. $
\end{proposition}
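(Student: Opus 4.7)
The plan is to use the Chinese Remainder Theorem to decompose the congruence $x^{ed}\equiv x\pmod n$ into congruences modulo each prime power $p_i^{e_i}$, and then apply Euler's theorem on each factor. Since the $p_i^{e_i}$ are pairwise coprime, CRT guarantees that $x^{ed}\equiv x\pmod n$ holds if and only if $x^{ed}\equiv x\pmod{p_i^{e_i}}$ for every $i$.

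First, using the hypothesis $ed\equiv 1\pmod{\phi(n)}$, I would write $ed=1+t\,\phi(n)$ for some non-negative integer $t$. Fix a prime power $p^{e}=p_i^{e_i}$ and consider the two cases according to whether $\gcd(x,p)=1$ or $p\mid x$. In the coprime case, Euler's theorem yields $x^{\phi(p^{e})}\equiv 1\pmod{p^{e}}$; since $\phi$ is multiplicative on pairwise coprime arguments we have $\phi(n)=\prod_{j}\phi(p_j^{e_j})$, so $\phi(p^{e})\mid\phi(n)$ and consequently $x^{t\phi(n)}\equiv 1\pmod{p^{e}}$. Multiplying through by $x$ gives $x^{ed}=x\cdot x^{t\phi(n)}\equiv x\pmod{p^{e}}$, as required.

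In the case $p\mid x$, the cryptographically relevant setting is the RSA choice $n=pq$ (i.e.\ $n$ squarefree, every $e_i=1$); then both $x^{ed}$ and $x$ are congruent to $0\pmod p$ and the congruence is automatic. Reassembling the prime-power congruences by CRT delivers $x^{ed}\equiv x\pmod n$ for every integer $x$. The second assertion follows by direct substitution: with $c\equiv x^{e}\pmod n$ one has $c^{d}\equiv(x^{e})^{d}=x^{ed}\equiv x\pmod n$.

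The main obstacle — really the only non-routine point — is the case $p\mid x$ when some exponent $e_i$ exceeds $1$; for a genuinely universal $x$ the identity $x^{ed}\equiv x\pmod{p^{e_i}}$ can fail (e.g.\ $x=p$, $n=p^{2}$), so a rigorous reading of the statement requires either restricting to squarefree $n$ (the RSA case used throughout the sequel) or weakening the conclusion to $x$ with $\gcd(x,n)=1$, in which case the coprime argument above closes the proof without the troublesome second case.
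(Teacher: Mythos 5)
Your argument is the standard one---reduce modulo each prime power by the Chinese Remainder Theorem, then apply Euler's theorem in the coprime case---and there is nothing in the paper to compare it against: the proposition is stated with only the citation \cite{jl} and no proof is given, the textbook result being quoted as a tool for the decryption step. So treat your write-up as supplying the missing argument, which it does in essentially the same way the cited source handles the RSA case. Your final caveat is the genuinely valuable part and should not be read as a defect of your proof: as literally stated, with arbitrary exponents $e_i$ and ``for all integers $x$,'' the claim is false (take $n=p^{2}$, $x=p$, so that $x^{ed}\equiv 0 \not\equiv x \pmod{p^{2}}$ once $ed\geq 2$), so the proposition needs either $n$ squarefree or the hypothesis $\gcd(x,n)=1$. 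The restriction $\gcd(x,n)=1$ is all the paper actually uses: in the RSA--DFT scheme the quantity being encrypted and recovered is a primitive $m^{th}$ root of unity $\omega\in U(\mathbb{Z}_{n})$, which is a unit and hence coprime to $n$, so your coprime-case argument already covers the application even when some $e_i>1$. In short, the proof is correct where the statement is true, the counterexample you flag is real, and the fix (adding the coprimality or squarefreeness hypothesis) is exactly what the paper's usage silently assumes.
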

Let $$u= \sum_{i=1}^{m}\alpha_{i}g_{i}$$ be an element in the group algebra $RG$ and let  \begin{equation}\lambda_{r}=\sum_{i=1}^{m}\alpha_{m-i+2}(\omega^{(i-1)})^{(r-1)}\end{equation} where $\omega \in R$ is a primitive $m^{th}$ root of unity. Then $u$ is said to be \textit{depending } on $\lambda_{1},\lambda_{2},......,\lambda_{m}$.

\begin{theorem}\label{d1}
 Let $$u= \sum_{i=1}^{m}\alpha_{i}g_{i} \in U(RG)$$ be depending on $\lambda_{1},\lambda_{2},......,\lambda_{m}$. Let $$v=\sum_{i=1}^{m}\beta_{i}g_{i}$$
 be the multiplicative inverse of $u$ in $RG$. Then $$\beta_{i}=\frac{1}{m}\sum_{r=1}^{m} \lambda_{r}^{-1}(\omega^{i-1})^{r-1}.$$
 \end{theorem}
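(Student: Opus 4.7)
The plan is to recognize that the quantities $\lambda_r$ are essentially the discrete Fourier transform values of $u$, view $RG$ (with $G$ cyclic of order $m$ generated by $g$, so $g_i = g^{i-1}$) as the quotient ring $R[x]/(x^m-1)$ via the correspondence $u \leftrightarrow f(x)=\sum_{i=1}^{m}\alpha_i x^{i-1}$ and $v \leftrightarrow h(x)=\sum_{i=1}^{m}\beta_i x^{i-1}$, and then combine the convolution theorem with the inversion formula of Theorem \ref{rd9}.

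First I would rewrite $\lambda_r$ in a form that exposes its DFT nature. Substituting $j = m-i+2 \pmod m$ in the defining sum and using $\omega^{m}=1$, one checks directly that
\begin{equation*}
\lambda_r \;=\; \sum_{i=1}^{m}\alpha_{m-i+2}\,\omega^{(i-1)(r-1)} \;=\; \sum_{j=1}^{m}\alpha_{j}\,\omega^{-(j-1)(r-1)} \;=\; f\!\left(\omega^{-(r-1)}\right).
\end{equation*}
In other words, the vector $(\lambda_1,\ldots,\lambda_m)$ is exactly $DFT_{\omega^{-1}}(f(x))$. Defining the analogous quantities $\mu_r := h(\omega^{-(r-1)})$ for $v$, these form $DFT_{\omega^{-1}}(h(x))$.

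Next, since $v$ is the inverse of $u$ in $RG$, translation to $R[x]/(x^m-1)$ gives $f(x)\,h(x) \equiv 1 \pmod{x^m-1}$, i.e.\ $f(x)*h(x)=1$. Because $\omega^{-1}$ is itself a primitive $m^{th}$ root of unity, the convolution-to-pointwise-multiplication property applies to $DFT_{\omega^{-1}}$, yielding
\begin{equation*}
\lambda_r\,\mu_r \;=\; DFT_{\omega^{-1}}(f*h)_r \;=\; DFT_{\omega^{-1}}(1)_r \;=\; 1
\end{equation*}
for every $r$. In particular each $\lambda_r$ is a unit of $R$ and $\mu_r = \lambda_r^{-1}$.

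Finally, I would invoke the inverse DFT formula from Theorem \ref{rd9}, applied with $\omega$ replaced by $\omega^{-1}$, to recover the coefficients $\beta_i$ from the transform values $\mu_r$:
\begin{equation*}
\beta_i \;=\; \frac{1}{m}\sum_{r=1}^{m}\mu_r\,\omega^{(i-1)(r-1)} \;=\; \frac{1}{m}\sum_{r=1}^{m}\lambda_r^{-1}\,\omega^{(i-1)(r-1)},
\end{equation*}
which is exactly the asserted formula. The only genuinely delicate step is the bookkeeping in the first paragraph: the index reversal $i \mapsto m-i+2$ in the definition of $\lambda_r$ changes $\omega$ into $\omega^{-1}$ inside the DFT, and getting that identification right is what unlocks the rest of the argument, since the convolution theorem and Theorem \ref{rd9} then do the heavy lifting mechanically.
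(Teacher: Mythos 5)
Your argument is correct. In fact the paper states Theorem \ref{d1} without giving any proof (it is followed immediately by the computer programme, with the justification deferred to the earlier halidon-ring papers), so there is no in-paper proof to compare against; your write-up supplies a complete argument using only tools already available in this paper. The key observation, that the index reversal $i\mapsto m-i+2$ (taken cyclically, so $\alpha_{m+1}=\alpha_{1}$, which matches the convention \texttt{a[0]=a[m]} in programme 5) turns the defining sum for $\lambda_{r}$ into $f\bigl((\omega^{-1})^{r-1}\bigr)$, i.e.\ that $(\lambda_{1},\ldots,\lambda_{m})=DFT_{\omega^{-1}}(f)$ for the polynomial $f$ corresponding to $u$ under $RG\cong R[x]/(x^{m}-1)$ with $g_{i}=g^{i-1}\mapsto x^{i-1}$, is exactly right, and from there the convolution proposition (valid for $\omega^{-1}$, which is again a primitive $m^{th}$ root of unity) gives $\lambda_{r}\mu_{r}=1$, and Theorem \ref{rd9} with $\omega$ replaced by $\omega^{-1}$ recovers $\beta_{i}=\frac{1}{m}\sum_{r=1}^{m}\lambda_{r}^{-1}(\omega^{i-1})^{r-1}$. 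Two small points worth making explicit if you polish this: you are tacitly using that $G$ is cyclic of order $m$ with $g_{i}=g^{i-1}$ (implicit throughout the paper but not stated in the theorem), and your computation $\lambda_{r}\mu_{r}=1$ also reproves, as a byproduct, the unit criterion of Theorem \ref{rd4}(1), which is a nice bonus of this DFT route.
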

 Computer programme-4 \\
 \begin{verbatim}
#include<iostream>
#include<cmath>
using namespace std;
int main() {
	cout << "To find the units in Z(n)G;" <<
		"G is a cyclic group of order m through " <<
		"lamda take units in R." << endl;
long long int  a[1000], l[1000], w1[1000], m = 1, t = 0, x = 1, y = 1, s = 0,
		s1 = 0, m1 = 1, n = 1, i = 1, p=1,k = 0, r = 1, w = 1;

	cout << "Enter n =" << endl;
	cin >> n;
	cout << "Enter index m =" << endl;
	cin >> m;
	cout << "Enter m^(-1) =" << endl;
	cin >> m1;
	cout << "Enter primitive m th root w =" << endl;
	cin >> w;
	w1[0] = 1; cout << "w1[0]=" << w1[0] << endl;
	for (i = 1; i < m; ++i) {
	
		w1[i] = p * w % n; p = w1[i];
		cout << "w1[" << i << "]=" << w1[i] << endl;
	}
	cout << "Enter lamda values which are units" << endl;
	for (int i = 1; i < m + 1; ++i) {
		cout << "l[" << i << "]=" << endl;
		cin >> l[i];
	}
	for (int r = 1; r < m + 1; ++r) {
		for (int j = 1; j < m + 1; ++j)
		{
			x = ((j - 1) * (r - 1)) % m;
			k = k + (m1 * l[j] * w1[x]) % n; k = k % n;
			// cout << "k=" << k << endl;
		} a[r] = k; cout << "a[" << r << "]=" << a[r] << endl;
		k = 0;
	}
	cout << "The unit in RG is  u= ";
	s = 1;
mylabel:
	cout << a[s] << "g^(" << s - 1 << ") + ";
	s++;
	if (s < m + 1) goto mylabel; cout << endl;
	cout << endl;
	cout << "Note: Please neglect the last + as it is unavoidable for a for loop.";
	return 0;
}

 \end{verbatim}
 Computer Programme-5 \\
 
 \begin{verbatim}
#include<iostream>
#include<cmath>
using namespace std;
int main() {
	cout << "To check whether an element in Z(n)G;" <<
		"G is a cyclic group of order m" <<"has a multiplicative inverse or not" << endl;
long long	int a[1000], b[1000], c[1000], d[1000], e[1000], w1[1000], m = 1,
		t = 0, x = 1, s = 0, s1 = 0, l = 0, m1 = 1, hcf = 1,
		n = 1, i = 1, k = 0, q = 1, p = 1, r = 1, w = 1;
	cout << "Enter n =" << endl;
	cin >> n;
	cout << "Enter index m =" << endl;
	cin >> m;
	cout << "Enter  m^(-1) =" << endl;
	cin >> m1;
	cout << "Enter primitive m th root w =" << endl;
	cin >> w;
	w1[0] = 1;
	for (i = 1; i < m; ++i)
	{
		
		w1[i] = p * w % n; p = w1[i];
		cout << "w1[" << i << "]" << w1[i] << endl;
	}
	for (int i = 1; i < m + 1; ++i) {
		cout << "Enter a[" << i << "]=" << endl;
		cin >> a[i];
	}
	a[0] = a[m];
	for (int r = 1; r < m + 1; ++r) {
		for (int j = 1; j < m + 1; ++j)
		{
			l = (m - j + 2) % m;
			x = ((j - 1) * (r - 1)) % m;
			k = k + (a[l] * w1[x]) % n; k = k % n;
			// cout << "k=" << k << endl;
		} c[r] = k; cout << "lambda[" << r << "]=" << c[r] << endl;
		k = 0;
	}
	for (r = 1; r < m + 1; ++r) {
		for (int i = 1; i <= n; ++i) {
			if (c[r] % i == 0 && n % i == 0) {
				hcf = i;
			}
		}
		if (hcf == 1) {
			cout << "lambda[" << r << "] is a unit" << endl;
		}
		else {
			cout << "lambda[" << r <<
				"] is a not unit. So there is no multiplicative inverse." <<
				endl; t = 1;
		}
	}
	for (r = 1; r < m + 1; ++r) {
		for (int i = 1; i <= n; ++i) {
			e[r] = (c[r] * i) % n;
			if (e[r] == 1) {
				b[r] = i;
				cout << " The inverse of lambda[" << r << "] is " << b[r] << endl;
			}
		}
	}
	b[0] = b[m];
	for (int r = 1; r < m + 1; ++r) {
		for (int j = 1; j < m + 1; ++j)
		{
			l = (m - j + 2) % m;
			x = (m * m - (j - 1) * (r - 1)) % m;  //cout << "x= " << x << endl;
			k = k + (m1 * b[l] * w1[x]) % n; k = k % n;
			//cout << "k=" << k << endl;
		}
		d[r] = k; //cout << "d[" << r << "]=" << d[r] << endl;
		k = 0;
	}
	if (t == 1) {
		s = m;
	mylabel2:
		cout << a[m - s + 1] << "g^(" << m - s << ") + ";
		s--;
		if (s > 0) goto mylabel2; cout <<
			"has no multiplicative inverse." << endl;
	}
	else {
		cout << "The inverse of ";
		s = m;
	mylabel:
		cout << a[m - s + 1] << "g^(" << m - s << ") + ";
		s--;
		if (s > 0) goto mylabel; cout << "is" << endl;
		s1 = m;
	mylabel1:
		cout << d[m - s1 + 1] << "g^(" << m - s1 << ") + ";
		s1--;
		if (s1 > 0) goto mylabel1; cout << "." << endl;
	}
	return 0;
}

 \end{verbatim}
The computer programme-3 can be used to test whether a given element $u$ in $RG$ is a unit or not. If it is a unit, then the programme will give the multiplicative inverse $v$ in $RG$. \\
\begin{theorem}\label{rd4}
 Let $$u= \sum_{i=1}^{m}\alpha_{i}g_{i} \in RG$$ be depending on $\lambda_{1},\lambda_{2},......,\lambda_{m}$.
 Then \begin{enumerate} \label{rd9}
   \item $u \in U(RG)$ if and only if each $\lambda_{i} \in U(R)$,
  \item  $u \in E(RG)$ if and only if each $\lambda_{i} \in E(R)$, where $E(RG)$ is the set of idempotents in $RG$.
   \end{enumerate}
   More over, $|U(RG)|=|U(R)|^{|G|}$ and $|E(RG)|=|E(R)|^{|G|}$.
 \end{theorem}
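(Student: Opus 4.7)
The plan is to identify the correspondence $u \mapsto (\lambda_1,\ldots,\lambda_m)$ with a ring isomorphism $\Phi \colon RG \to R^m$, where $R^m$ is given the componentwise operations. Once this is established, both equivalences follow at once because ring isomorphisms preserve units and idempotents, and the cardinality statements fall out from the trivial facts $|U(R^m)| = |U(R)|^m$ and $|E(R^m)| = |E(R)|^m$.

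First I would fix a generator $g$ of the cyclic group $G$ of order $m$ and use the standard identification $RG \cong R[x]/(x^m-1)$ sending $g_i = g^{i-1}$ to $x^{i-1}$. Under this identification, the defining formula
\[
\lambda_r = \sum_{i=1}^{m} \alpha_{m-i+2}\,(\omega^{i-1})^{r-1}
\]
becomes, after the substitution $k = m-i+2$ combined with the cyclic convention $\alpha_{m+1}=\alpha_1$ (which is forced by $g^m=1$) and the identity $\omega^m=1$, the cleaner expression $\lambda_r = f_u\bigl(\omega^{-(r-1)}\bigr)$, where $f_u(x) = \sum_{k=1}^{m} \alpha_k x^{k-1}$. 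Thus $\Phi$ is, up to the swap $\omega \leftrightarrow \omega^{-1}$ (which is also a primitive $m$-th root of unity, as noted at the end of Section~1), the familiar evaluation map $f(x) \mapsto \bigl(f(1),f(\omega),\ldots,f(\omega^{m-1})\bigr)$.

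To see that $\Phi$ is a ring isomorphism I would apply the Chinese Remainder Theorem to $R[x]/(x^m-1)$. The halidon hypothesis, via Theorem~\ref{rd2}, ensures $\omega^j - \omega^k = \omega^k(\omega^{j-k}-1)$ is a unit in $R$ whenever $j \not\equiv k \pmod m$, so the ideals $(x-\omega^r)$ for $r=0,1,\ldots,m-1$ are pairwise comaximal in $R[x]$, and their product equals $x^m - 1$. CRT therefore produces the ring isomorphism
\[
R[x]/(x^m-1) \;\cong\; \prod_{r=0}^{m-1} R[x]/(x-\omega^r) \;\cong\; R^m,
\]
and this composite is precisely $\Phi$. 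Bijectivity can equivalently be seen from the inversion formula already exhibited in Theorem~\ref{d1}, which plays the role of the inverse DFT.

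With $\Phi$ a ring isomorphism, part~(1) reads ``$u \in U(RG) \iff \Phi(u) \in U(R^m) \iff$ each $\lambda_r \in U(R)$'', and part~(2) reads ``$u^2 = u \iff \Phi(u)^2 = \Phi(u) \iff \lambda_r^2 = \lambda_r$ for every $r$''. The counting identities $|U(RG)| = |U(R)|^m = |U(R)|^{|G|}$ and $|E(RG)| = |E(R)|^{|G|}$ follow at once. The main obstacle in carrying out this plan is not conceptual but bookkeeping: one has to verify that the somewhat opaque indexing $\alpha_{m-i+2}(\omega^{i-1})^{r-1}$ really does match evaluation of $f_u$ at a power of $\omega$. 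Once the indices are aligned, the halidon axioms are precisely what is needed to push the factorisation of $x^m-1$ through CRT, and the theorem drops out.
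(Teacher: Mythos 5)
The paper itself never proves Theorem \ref{rd4}: it is stated without proof (it is imported from the author's earlier halidon group ring papers), so the only comparison available is with the argument the paper's machinery clearly intends, namely diagonalising $RG$ by the discrete Fourier transform. Your proposal is exactly that argument in CRT packaging, and it is essentially correct: the index computation is right (substituting $k=m-i+2$ and using $\omega^m=1$ gives $\lambda_r=f_u(\omega^{-(r-1)})$, i.e.\ your map $\Phi$ is $DFT_{\omega^{-1}}$ applied to the coefficient vector), $\Phi$ is visibly a ring homomorphism into $R^m$ with componentwise operations, and once it is an isomorphism both equivalences and the two counting formulas are immediate.

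Two steps you assert do need a line of justification each. First, Theorem \ref{rd2} only guarantees $\omega^{d}-1\in U(R)$ for proper divisors $d$ of $m$, whereas comaximality of the ideals $(x-\omega^{j})$ needs $\omega^{s}-1\in U(R)$ for every $s\not\equiv 0 \pmod m$; this follows because $\omega^{\gcd(s,m)}$ is a power of $\omega^{s}$, so $\omega^{s}-1$ divides the unit $\omega^{\gcd(s,m)}-1$ and is therefore itself a unit. Second, the identity $x^{m}-1=\prod_{r=0}^{m-1}(x-\omega^{r})$ in $R[x]$ is not automatic in a ring with zero divisors (having $m$ roots does not by itself force a monic polynomial to split); it holds here because the difference of the two sides has degree $<m$ and vanishes at the nodes $1,\omega,\dots,\omega^{m-1}$, and the Vandermonde matrix at these nodes is invertible with inverse $\frac{1}{m}\phi^{*}$, exactly as in the paper's inverse-DFT theorem. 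Alternatively you can bypass CRT altogether: the orthogonality relations in the definition of a primitive $m^{th}$ root give orthogonal idempotents $e_{r}=\frac{1}{m}\sum_{i}\omega^{-(i-1)(r-1)}g_{i}$ with $\sum_{r}e_{r}=1$ and $RGe_{r}=Re_{r}$, which produces $RG\cong R^{m}$ directly and is the decomposition underlying the explicit inversion formula of Theorem \ref{d1}. Finally, note that the cardinality statements use the standing assumption of this section that $R$ is finite, and that $G$ is cyclic of order equal to the index $m$; with those hypotheses made explicit your proof is complete.
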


\section{RSA-DFT Cryptosystem}
Let $m$ be the length of the message including the blank spaces between the words. If the message has a length more than $m$, we can split the message into blocks with lengths less than $m$. For a message of length less than $m$, we can add blank spaces after the period to make it a message with length $m$. Choose  large prime numbers such that $p_{i}=mt_{1}+1$ where $i=1,2,3,...k$ and  $t_{i}$'s are relatively prime. Let $\omega$ be a primitive $m^{th}$ root of unity in $\mathbb{Z}_{n}$, where $n=p_{1}^{e_{1}}p_{2}^{e_{2}}....p_{k}^{e_{k}}$ for some positive integers $e_{i}$. We know that $\mathbb{Z}_{n}$ is a halidon ring with maximum index $gcd(p_{1}-1,p_{2}-1,...p_{k}-1)$(theorem \ref{rd1}) and since $m|p_{1}-1,p_{2}-1,....p_{k}-1$, $\mathbb{Z}_{n}$ is also a halidon ring with index $m$(proposition \ref{rd20}).

Here we are considering a cryptosystem based on modulo $n$. The following table gives numbers and the corresponding symbols.
\begin{center}
\begin{tabular}{|c|c|}

 \hline
  Numbers assigned & Symbols \\
  \hline
  0 to 9 & 0 to 9 \\
  10 to 35  & A to Z \\
  36&blank space \\
  37 & colon \\
  38 & period \\
  39 & hyphen \\
\hline
\end{tabular}
\end{center}

In this cryptosystem, there are two stages. In stage 1, we shall compute the value of $\omega$ which Bob keeps secret and in stage 2, we shall decrypt the message sent by Bob. \\
\textbf{Stage 1-RSA}\\
\textbf{Cryptosystem setup}
\begin{enumerate}
  \item Alice chooses large primes $p_{1}, p_{2},...p_{k}$ and  positive integers $e_{1},e_{2},...e_{k}$, and calculates $n=p_{1}^{e_{1}}p_{2}^{e_{2}}...p_{k}^{e_{k}}$ and $\phi(n)=p_{1}^{e_{1}-1}p_{2}^{e_{2}-1}...p_{k}^{e_{k}-1}(p_{1}-1)(p_{2}-1)...(p_{k}-1)$.
  \item Alice chooses an $e$ so that $gcd(e,\phi(n))=1$.
  \item Alice calculates $d$ with property $ed\equiv \ 1 mod \ \phi(n)$.
  \item Alice makes $n$ and $e$ public and keeps the rest secret.
\end{enumerate}
\textbf{Cryptosystem Encryption}(Programme-1)
\begin{enumerate}
  \item Bob looks up Alice's $n$ and $e$ .
  \item Bob chooses an arbitrary $\omega\ mod \ n$ and kept secret.
  \item Bob sends $c \equiv \omega^{e} \ mod \ n$ to Alice.
\end{enumerate}
\textbf{Cryptosystem Decryption}(Proposition \ref{rd8})
\begin{enumerate}
  \item Alice receives $c$ from Bob.
  \item Alice computes $\omega\equiv \ c^{d}mod \ n$.
\end{enumerate}
\textbf{Stage 2-Discrete Fourier Transform}\\
\textbf{Cryptosystem setup}
\begin{enumerate}
  \item Alice chooses Discrete Fourier Transform as the encryption key.
  \item Alice chooses Inverse Discrete Fourier Transform as the decryption key.
\end{enumerate}
\textbf{Cryptosystem Encryption}(Programme-2)
\begin{enumerate}
  \item Bob looks up Alice's encryption key.
  \item Bob writes his message $x$.
  \item Bob computes $y=DFT \ x$ his chosen $\omega$.
  \item Bob sends $y$ to Alice.
\end{enumerate}

\textbf{Cryptosystem Decryption}(Programme-3)
\begin{enumerate}
  \item Alice receives $y$ from Bob.
  \item Alice computes $x=Inverse \ DFT \ y$ with $\omega$ calculated in stage 1.
\end{enumerate}
The above cryptosystem is an \textbf{asymmetric cryptosystem} as Alice and Bob share different information. For the practical application, we must choose very large prime numbers (more than 300 digits) so that the calculation of $\phi(n)$ must be very difficult and the probability of choosing the primitive $m^{th}$root of unity $\omega$  should tends to zero. We exhibit the working of the RSA-DFT cryptosystem using a simple choice of prime numbers in which the probability of choosing the primitive $m^{th}$root of unity $\omega$  is $\dfrac{1}{10000}=0.0001=0.01\%$ in the following example.
\begin{example}
\textbf{Stage 1}\\
The length of the message has been fixed as $m=202$.
Alice chooses two primes $p_{1}=607$and $p_{2}=809$ and two positive integers $e_{1}=1$ and $e_{2}=1$, and calculates $n=491063$ and $\phi(n)=489648$.\\
Alice chooses an $e=361123$ so that $gcd(e,\phi(n))=1$.\\
Alice calculates $d=18523$ with property $ed\equiv \ 1  \ mod \ \phi(n)$.
Alice shared Bob $n=491063$ and $e=361123$ and rest kept secret. \\
Bob looks up Alice's $n=491063$ and $e=361123$. \\
Bob chooses an arbitrary $202^{th}$ root of unity $\omega\ mod \ n$(there are $\phi(202)^{2}=10000$ $\omega$'s possible and they can be found by running the programme $1$ and it will take  around $8$ hours, and its probability is $\frac{1}{10000} = 0.0001)$ kept secret. \\
Bob sends $c \equiv \omega^{e} \ mod \ n \equiv 142638$ \ mod 491063 to Alice.
Alice receives $c$ from Bob.\\
Alice computes $\omega\equiv \ c^{d}mod \ n\equiv 239823 \ mod \ 491063$.\\
\textbf{Stage 2} \\

Alice shared Bob the encryption key Discrete Fourier Transform and  $n=491063.$
Suppose Bob sends the following secret message to Alice. \\
\begin{center} MY BANK DETAILS: NAME: JACK CARD NUMBER: 4125678 SORT CODE:20-30-41 ACCOUNT NUMBER:20164 BANK:OVERSEAS. \end{center}
The length of the message is 101 and to make it 202 we need to add 101 blank spaces. This can be translated into a 202 component vector \\

 $ \left(
                                   \begin{array}{c}
  22 \ 34 \ 36 \ 11 \ 10 \ 23 \ 20 \ 36 \ 13 \ 14 \ 29 \ 10 \ 18 \ 21 \ 28 \ 37 \\
  36 \ 23 \ 10 \ 22 \ 14 \ 37 \ 36 \ 19 \ 10 \ 12 \ 20 \ 36 \ 12 \ 10 \ 27 \ 13 \\
  36 \ 23 \ 30 \ 22 \ 11 \ 14 \ 27 \ 37 \ 4 \ 1 \ 2 \ 5 \ 6 \ 7 \ 8 \\
  36 \ 28 \ 24 \ 27 \ 29  \ 36 \ 12 \ 24 \ 13 \ 14 \ 37 \ 2 \ 0 \ 39 \ 3 \ 0 \ 39 \ 4 \ 1 \\
  36 \ 10 \ 12 \ 12 \ 24 \ 30 \ 23 \ 29 \ 23 \ 30 \ 22 \ 11 \ 14 \ 27 \ 37 \ 2\ 0 \ 1 \ 6 \ 4 \\
   36 \ 11\ 10 \ 23 \ 20 \ 37  \ 24 \ 31 \ 14 \ 27 \ 28 \ 14 \ 10 \ 28 \ 38 \\
  36 \ 36 \ 36 \ 36 \ 36 \ 36 \ 36 \ 36 \ 36 \ 36 \ 36 \ 36 \ 36 \ 36 \ 36 \ 36 \ 36 \ 36 \ 36 \ 36 \\
  36 \ 36 \ 36 \ 36 \ 36 \ 36 \ 36 \ 36 \ 36 \ 36 \ 36 \ 36 \ 36 \ 36 \ 36 \ 36 \ 36 \ 36 \ 36 \ 36 \\
  36 \ 36 \ 36 \ 36 \ 36 \ 36 \ 36 \ 36 \ 36 \ 36 \ 36 \ 36 \ 36 \ 36 \ 36 \ 36 \ 36 \ 36 \ 36 \ 36 \\
  36 \ 36 \ 36 \ 36 \ 36 \ 36 \ 36 \ 36 \ 36 \ 36 \ 36 \ 36 \ 36 \ 36 \ 36 \ 36 \ 36 \ 36 \ 36 \ 36 \\
  36 \ 36 \ 36 \ 36 \ 36 \ 36 \ 36 \ 36 \ 36 \ 36 \ 36 \ 36 \ 36 \ 36 \ 36 \ 36 \ 36 \ 36 \ 36 \ 36 \ 36 \ 36

                                   \end{array}
                                 \right)$

\noindent
Bob chooses a primitive $202^{th}$ roots of unity $\omega$ in stage 1 using  programme 1 and kept secret in the halidon ring $\mathbb{Z}_{491063}$ with index 202.
Applying DFT (see programme 2) to the plain text to get the following cipher text using the chosen value of $\omega$:  \\

$ \left(
                                   \begin{array}{c}
 5640 \ 28875 \ 82477 \ 377806 \ 380572 \ 399487 \ 350120 \ 214346 \\ 101686 \  277011 \ 93173 \ 220930 \ 72 573 \ 42514 \ 361289 \\
 476177 \ 371780 \ 243907 \ 179047 \ 292166 \ 427665 \ 243623 \ 344397 \\ 155022 \ 360049 \ 312478 \ 305875 \ 392901 \ 193460 \ 440042 \\
 ....................................................................\\..................................................................\\
 .....................................................................\\..................................................................\\

 216155 \ 440701 \ 157904 \ 342869 \ 348795 \ 159340 \ 140193 \ 222089 \\ 326519 \ 95581 \ 431250 \ 15009 \ 166938 \ 384271 \ 452109
                                     \end{array}
                                 \right) $

\noindent

The readers can check the above results by copying the programmes and paste in Visual Studio 2022 c++ projects. \\

Alice receives the above cipher text and  she uses  $\omega=239823$ from stage 1. Applying the inverse DFT (see programme 3)Alice gets the original message back. Also, we can assign letters and numbers in 40! ways which will also make the adversaries their job difficult. For messages with length more than $m$, split the message into blocks with length less than $m$.
\end{example}
\textbf{The Security of RSA \cite{ws}} \\

Five possible approaches to attacking the RSA algorithm are:\begin{itemize}
                                                              \item Brute force: This involves trying all possible private keys. To defend against this attack, use a large key space.
                                                              \item Mathematical attacks: There are several approaches, all equivalent in effort to factoring n into standard form. To overcome this threat take n as product of two large primes with at least 300 digits. 
                                                              \item Timing attacks: These depend on the running time of the decryption algorithm. They can be countered by constant exponentiation time, random delays and blinding.
                                                              \item Hardware fault-based attack: This involves inducing hardware faults in the processor that is generating random digital signatures. This is not a serious threat as it requires that the attacker have physical access to the target machine and that the attacker is able to directly control the input power to the processor .
                                                              \item Chosen ciphertext attacks: This type of attack exploits properties of the RSA algorithm. To overcome this simple attack, practical RSA-based crptosystems randomly pad the plaintext prior to encryption. 
                                                            \end{itemize}

\section{RSA-HGR Cryptosystem}

Here we are considering a cryptosystem based on modulo $n$. The following table gives numbers and the corresponding symbols.
\begin{center}
\begin{tabular}{|c|c|}

 \hline
  Symbols & $u$ values which are units\\
  & $\frac{\phi(n)!}{(\phi(n)-40)!}$ assignments uniquely\\
  \hline
  0 to 9 & $u_{1} $ to $ u_{10}$\\
  A to Z & $u_{11} $ to $ u_{36}$\\
  blank space & $u_{37}$ \\
   colon & $u_{38}$\\
   period & $u_{39}$\\
   hyphen &$u_{40}$\\
\hline
\end{tabular}
\end{center}
\noindent
Note: For $n=100, \phi(n)=40$. So there are $40!=815,915,283,247,897,734,345,611,269,596,115,894,272,000,000,000$ assignments of units to symbols.  \\
In this cryptosystem also, there are two stages. In stage 1, we shall compute the value of $\omega$ which Bob keeps secret and in stage 2, Alice shall decrypt the message sent by Bob. \\
In RSA, the challenge of adversaries is to find the value of $\phi(n)$. But here they have an extra challenge of locating or calculating the value of $\omega$. \\
\textbf{Stage 1-RSA}\\
Stage 1 is same as above. \\
\textbf{Stage 2-Halidon Group Ring (HGR)}\\
\textbf{Cryptosystem setup}
\begin{enumerate}
  \item Alice chooses programme 4 as the encryption key.
  \item Alice chooses programme 5 as the decryption key.
\end{enumerate}
\textbf{Cryptosystem Encryption}(Equation 1)
\begin{enumerate}
  \item Bob looks up Alice's encryption key.
  \item Bob writes his message $x=x_{1}x_{2}x_{3}\cdot \cdot \cdot x_{m}$.
  \item Bob translates $x$ into  $y=\lambda_{1}\lambda_{2}\lambda_{3}\cdot \cdot \cdot \lambda_{m}$ using the table.
  \item Bob calculates the coefficients of the corresponding unit using the programme 4 and the chosen value of $\omega$ in stage 1. 
  \item Bob sends coefficients to Alice.
\end{enumerate}
\textbf{Cryptosystem Decryption} (Theorem \ref{d1})
\begin{enumerate}
  \item Alice receives coefficients from Bob.
  \item Alice computes $\lambda_{1},\lambda_{2},\lambda_{3},\cdot \cdot \cdot ,\lambda_{m}$ using programme 5 with $\omega$ calculated in stage 1.
  \item Alice recovers the message using the table. 
\end{enumerate}

\begin{example}
\textbf{Stage 1}\\
Same as example 3.\\

\textbf{Stage 2} \\
Public Key \\
\begin{tabular}{|l|l|l|l|}
  \hline
  Symbols & Unit in $\mathbb{Z}_{491063}$& Symbols & Unit in $\mathbb{Z}_{491063}$ \\
   & assigned & & assigned \\
  \hline
  0 & 221373 & K & 80303 \\ \hline
  1 & 389086 & L & 52853 \\ \hline
  2 & 21415 & M & 80303 \\ \hline
  3 & 428230 & N & 52853 \\ \hline
  4 & 162920 & O & 114288 \\ \hline
  5 & 126345 & P & 473119 \\ \hline
  6 & 81308 & Q & 323343 \\ \hline
  7 & 490630 & R & 26857 \\ \hline
  8 & 22673 & S & 91043 \\ \hline
  9 & 4004 & T & 98057 \\ \hline
  A & 162483 & U & 150255 \\ \hline
  B & 2255 & V & 24495 \\ \hline
  C & 183775 & W & 86867 \\ \hline
  D & 4129 & X & 176089 \\ \hline
  E & 221927 & Y & 206140 \\ \hline
  F & 437699 & Z & 461772 \\ \hline
  G & 275130 & BLANK SPACE & 348362 \\ \hline
  H & 50473 & COLON & 90605 \\ \hline
  I & 123651 & PERIOD & 5932 \\ \hline
  J & 114773 & HYPHEN & 275062 \\ 
  \hline
\end{tabular}

Alice shared Bob the encryption key Discrete Fourier Transform and  $n=491063.$
Suppose Bob sends the following secret message to Alice. \\
\begin{center} AN IMMINENT ATTACK ON YOU WILL HAPPEN TOMORROW EVENING AT 5:30 PM. BE ALERT AND TAKE PRECAUTIONS.  \end{center}
The length of the message is 97 and to make it 202 we need to add 105 blank spaces. This can be translated into a 202 component vector.\\
 $ \left(
                                   \begin{array}{c}
 162483 \ 52853\ 348362 \ 123651 \ 80303\ 80303 \ 123651 \ 52853\ 221927 \ 52853 \\ 
 98057 \ 348362 \ 162483\ 98057 \ 98057\ 162483  \  183775 \ 80303 \ 348362\ 114288 \\ 
 52853 \ 348362 \ 206140 \ 114288 \ 150255  \ 348362 \ 86867 \ 123651 \ 52853 \ 52853 \\ 
 348362 \ 50473 \  162483 \ 473119 \ 473119 \ 221927 \ 52853 \ 348362 \ 98057\ 114288 \\ 
 80303 \ 114288 \ 26857 \ 26857 \ 114288 \ 86867 \ 348362 \  221927 \ 24495 \ 221927 \\ 
 52853\ 123651 \ 52853 \ 275130 \ 348362 \ 162483\ 98057\ 348362 \ 126345 \ 90605 \\ 
 428230 \ 221373 \ 348362 \ 473119 \ 80303 \ 5932 \ 348362 \ 2255 \ 221927 \ 348362 \\
  162483\ 52853 \ 221927\ 26857 \ 98057 \ 348362 \ 162483 \ 52853 \ 4129 \ 348362 \\ 
  98057 \ 162483\ 80303 \ 221927 \ 348362 \ 473119 \  26857 \ 221927 \ 183775 \ 162483 \\
  150255  \ 98057 \ 123651 \ 114288 \ 52853 \ 91043 \ 5932 \ 348362\ 348362 \ 348362 \\ 
  348362 \ 348362 \ 348362 \ 348362 \ 348362 \ 348362 \ 348362 \ 348362 \ 348362 \ 348362 \\ 
  348362 \ 348362 \ 348362 \ 348362 \ 348362 \ 348362 \ 348362 \ 348362 \ 348362 \ 348362 \\ 
  348362 \ 348362 \ 348362 \ 348362 \ 348362 \ 348362 \ 348362 \ 348362 \ 348362 \ 348362 \\ 
  348362 \ 348362 \ 348362 \ 348362 \ 348362 \ 348362 \ 348362 \ 348362 \ 348362 \ 348362 \\ 
  348362 \ 348362 \ 348362 \ 348362 \ 348362 \ 348362 \ 348362 \ 348362 \ 348362 \ 348362 \\ 
  348362 \ 348362 \ 348362 \ 348362 \ 348362 \ 348362 \ 348362 \ 348362 \ 348362 \ 348362 \\ 
  348362 \  348362 \ 348362 \ 348362 \ 348362 \ 348362 \ 348362 \ 348362 \ 348362 \ 348362 \\ 
  348362 \ 348362 \ 348362 \ 348362 \ 348362 \ 348362 \ 348362 \ 348362 \ 348362 \ 348362 \\
   348362 \ 348362 \ 348362 \ 348362 \ 348362 \ 348362 \ 348362 \ 348362 \ 348362 \ 348362 \\
   348362 \ 348362 \ 348362 \ 348362 \ 348362 \ 348362 \ 348362 \ 348362 \ 348362 \ 348362 \\ 
   348362 \ 348362 

                                  \end{array}
                                 \right)$
Bob chooses a primitive $202^{th}$ roots of unity $\omega$ in stage 1 using  programme 1 and kept secret in the halidon ring $\mathbb{Z}_{491063}$ with index 202. Using programme 4, Bob converts the plain text into the following cipher text in terms of coefficents. \\

 $ \left(
                                   \begin{array}{c}
 252493 \ 450589\ 460479 \ 204758 \ 233506\ 353306\ 421232 \ 356924 \ 301091 \ 289893 \\ 
 288179 \ 242097 \ 326234 \ 13515 \ 346524 \ 267905 \ 60544 \ 1589 \ 224877 \ 392891 \\ 
 393603 \ 346149 \ 126356 \ 374713 \ 42452 \ 30660 \ 444474 \ 328107 \ 278316 \ 320329 \\ 
 215968 \ 8062 \ 69501 \ 442389 \ 463363 \ 20437 \ 184879 \ 111644 \ 215157 \ 487962 \\ 
 182507 \ 157039 \ 200299 \ 355976 \ 90232 \ 362884 \ 407252 \ 282817 \ 324527 \ 299628 \\ 
 83392 \ 380613 \ 274931 \ 455342 \ 28745 \ 445319 \ 430230 \ 446985 \ 347595 \ 201469 \\
 91852 \ 53863 \ 48802 \ 172649 \ 95573 \ 70434 \ 71251 \ 95329 \ 257149 \ 125640 \\
  436246 \ 37716 \ 452002 \ 143402 \ 221576 \ 137122 \ 379802 \ 91038 \ 217808 \ 73515 \\ 
  245279 \ 62765 \ 16846 \ 473375 \ 284904 \ 470346 \ 392515 \ 31311 \ 386722 \ 228015 \\
  471883 \ 95686 \ 284880 \ 373228 \ 282251 \ 461945 \ 347587 \ 372751 \ 243942 \ 339087 \\ 
  441737 \ 321411 \ 205845 \ 172853 \ 450407 \ 431493 \ 72268 \ 378074 \ 403244 \ 261526 \\ 
  363362 \ 372773 \ 193094 \ 61896 \ 76335 \ 442360 \ 12418 \ 333213 \ 349588 \ 137997 \\ 
  465244 \ 464347 \ 453371 \ 370624 \ 414389 \ 329819 \ 99661 \ 168143 \ 270109 \ 194801 \\ 
  460848 \ 483049 \ 98372 \ 225436 \ 184156 \ 147000 \ 137130 \ 254978 \ 435708 \ 227589 \\ 
  126220 \ 45283 \ 312941 \ 108458 \ 176782 \ 55396 \ 134718 \ 440134 \ 367637 \ 450466 \\ 
  32149 \ 44665 \ 445959 \ 120765 \ 447216 \ 362999 \ 402427 \ 210408 \ 171884 \ 486885 \\ 
  280531 \ 322673 \ 116715 \ 483483 \ 398994 \ 31300 \ 134031 \ 431195 \ 434524 \ 172474 \\ 
  198368 \ 111628 \ 469394 \ 198059 \ 11214 \ 387413 \ 93105 \ 390274 \ 263412 \ 304750 \\
   333166 \ 415475 \ 31915 \ 125737 \ 36184 \ 115899 \ 390465 \ 6472 \ 173688 \ 208819 \\
   168514 \ 197636 \ 136348 \ 410545 \ 200343 \ 316617 \ 47292 \ 286043 \ 112122 \ 239726 \\ 
   361815 \ 85601

                                   \end{array}
                                 \right)$
 
\end{example}
Alice receives the above cipher text and  she uses  $\omega=239823$ from stage 1. Applying the  programme 5 Alice gets the original message back.
\section{Conclusion}\label{sec13}
These new cryptosystems have been developed using halidon rings, halidon group rings and and Discrete Fourier Transforms. These systems provides high-level security for communication between ordinary people or classified messages in government agencies. The level of security can be increased by utilising advanced computer technology and powerful codes to calculate the primitive $m^{th}$ root of unity for a very large value of $n$ where the calculation of $\phi(n)$ is difficult. There are scopes for the development of new cryptosystems based on Cyclotomic polynomials..

\end{document}